\newcommand{\xmark}{\ding{55}}%
\newcommand{\custompar}[1]{\vspace{.15cm}\noindent{\bf #1}\:}
\newcommand{\softmax}{\mathsf{softmax}} 
\newcommand{\logsoftmax}{\mathsf{logsoftmax}} 
\newcommand{\logitbias}{b} 
\newcommand{\logit}{\mathsf{logit}} 
\newcommand{\N}{\mathbf{N}} 
\newcommand{\api}{\mathcal{O}}
\newcommand{\vocabsize}{l}
\newcommand{\topk}{\mathsf{TopK}}
\newcommand{\Argtopk}{\mathsf{ArgMax}}
\newcommand{\x}{p}
\newcommand{\z}{\mathbf{z}}
\newcommand{\Q}{\mathbf{Q}}
\newcommand{\U}{\mathbf{U}}
\newcommand{\V}{\mathbf{V}}
\newcommand{\W}{\mathbf{W}}
\newcommand{\E}{\Et^\top}
\newcommand{\Vt}{\mathbf{V}^\top}
\newcommand{\bS}{\mathbf{S}}
\newcommand{\R}{\mathbb{R}}
\renewcommand{\H}{\mathbf{H}}
\newcommand{\M}{\mathbf{M}}
\newcommand{\tEt}{\tilde{\Et}}
\newcommand{\Et}{\W}
\newcommand{\calX}{\mathcal{X}}
\newcommand{\calP}{\mathcal{P}}
\newcommand{\br}[1]{\left({#1}\right)}
\newcommand{\bias}{\logitbias}
\newcommand{\diag}{\mathrm{diag}}
\newcommand{\one}{\mathbf{1}}
\newcommand{\nicett}[1]{{\fontfamily{lmvtt}\fontsize{10.5pt}{12pt}\selectfont #1}}
\newcommand{\ada}{\nicett{ada}}
\newcommand{\babbage}{\nicett{babbage}}
\newcommand{\babbagetwo}{\nicett{babbage-002}}
\newcommand{\gptthree}{\nicett{gpt-3.5}}
\newcommand{\gptturbo}{\nicett{gpt-3.5-turbo}}
\newcommand{\gptturbonov}{\nicett{gpt-3.5-turbo-1106}}
\newcommand{\gptturboinstruct}{\nicett{gpt-3.5-turbo-instruct}}
\newcommand{\calC}{\mathcal{C}}
\DeclareMathOperator*{\argmax}{arg\,max}
\def\draft{1}
\newcommand{\cc}[1]{\textcolor{cyan}{CC: #1}}
\newcommand{\cc}[1]{}
\theoremstyle{plain}
\newtheorem{theorem}{Theorem}[section]
\newtheorem{lemma}[theorem]{Lemma}
\theoremstyle{definition}
\theoremstyle{remark}
\newcommand{\customtitle}{Stealing Part of a Production Language Model}
\icmltitlerunning{Stealing Part of a Production Language Model}
\begin{document}

\twocolumn[
\icmltitle{\customtitle}




\begin{icmlauthorlist}
\icmlauthor{Nicholas Carlini}{google}
\icmlauthor{Daniel Paleka}{ethz}
\icmlauthor{Krishnamurthy (Dj) Dvijotham}{google}
\icmlauthor{Thomas Steinke}{google}
\icmlauthor{Jonathan Hayase}{uw}
\icmlauthor{A. Feder Cooper}{google}
\icmlauthor{Katherine Lee}{google}
\icmlauthor{Matthew Jagielski}{google}
\icmlauthor{Milad Nasr}{google}
\icmlauthor{Arthur Conmy}{google}
\icmlauthor{Itay Yona}{google}
\icmlauthor{Eric Wallace}{oai}
\icmlauthor{David Rolnick}{mcgill}
\icmlauthor{Florian Tram\`{e}r}{ethz}
\end{icmlauthorlist}

\icmlaffiliation{google}{Google DeepMind}
\icmlaffiliation{ethz}{ETH Zurich}
\icmlaffiliation{mcgill}{McGill University}
\icmlaffiliation{uw}{University of Washington}
\icmlaffiliation{oai}{OpenAI}

\icmlcorrespondingauthor{Nicholas Carlini}{nicholas@carlini.com}

\icmlkeywords{Machine Learning, ICML}
 
\vskip 0.3in
]



\printAffiliationsAndNotice{}

\begin{abstract}
  We introduce the first model-stealing attack that extracts precise,
  nontrivial information from black-box production language models like Open\-AI's ChatGPT or Google's PaLM-2.
  Specifically, our attack recovers the \emph{embedding projection layer} (up to symmetries) 
  of a transformer model, given typical API access.
  For under \$20 USD, our attack extracts the entire projection matrix of Open\-AI's \ada{} and \babbage{} language models. We
  thereby confirm, for the first time, that these black-box models
  have a hidden dimension of 1024 and 2048, respectively.
  We also recover the exact hidden dimension size of the \gptturbo{} model,
  and estimate it would cost under \$2{,}000 in queries to recover the entire projection matrix.
  We conclude with potential defenses and mitigations,
  and discuss the implications of possible future work that
  could extend our attack.
\end{abstract}
 
\vspace{-1em}
\section{Introduction}

Little is publicly known about the inner workings of
today's most popular large language models, 
such as GPT-4, Claude 2, or Gemini.
The GPT-4 technical 
report states it ``contains no [...] details about the architecture (including model size), hardware, training compute,
dataset construction, training method, or similar''~\cite{openai2023gpt4}. 
Similarly, the PaLM-2 paper states that ``details of [the] model size and architecture are withheld from external publication''~\cite{anil2023palm}.
This secrecy is often ascribed to 
``the competitive landscape'' 
(because these models are expensive to train) and 
the ``safety implications of large-scale models''~\cite{openai2023gpt4}
(because it is easier to attack models when more information is available).
Nevertheless, while these models' weights and internal details are not publicly accessible,
the models themselves are exposed via APIs. 

In this paper we ask:
\emph{how much information can an adversary learn about a 
production language model by making queries to its API?}
This is the question studied by the field of \emph{model stealing}~\cite{tramer2016stealing}:  
the ability of an adversary to extract model weights by making queries its API.\looseness=-1

\textbf{Contributions.~~}
We introduce an attack that can be applied to
black-box language models,
and allows us to recover the complete \emph{embedding projection layer} 
of a transformer language model.
Our attack departs from prior approaches that reconstruct a model in a \emph{bottom-up} fashion, 
starting from the input layer. 
Instead, our attack operates \emph{top-down} and directly extracts the model's last layer. 
Specifically, we exploit the fact that the final layer of a language model projects from the hidden dimension to a (higher dimensional) logit vector. This final layer is thus low-rank, and by making targeted queries to a model's API, we can extract its embedding dimension or its final weight matrix.

Stealing this layer is useful for several reasons.
First, it reveals the \emph{width} of the transformer model, which is often correlated with its total parameter count.
Second, it slightly reduces the degree to which the model is a complete ``black-box'',
which so might be useful for future attacks.
Third, while our attack recovers only a (relatively small) part of the entire model,
the fact that it is at all possible to steal \emph{any} parameters of a
production model is surprising, and raises concerns that
extensions of this attack might be able to recover more information.
Finally, recovering the model's last layer (and thus hidden dimension) may reveal more global information about the model, such as relative size differences between different models.

Our attack is effective and efficient, and is
applicable to production models whose APIs expose full logprobs, or a ``logit bias''.
This included
Google's PaLM-2 and OpenAI's {GPT-4}~\citep{anil2023palm, openai2023gpt4};
after responsible disclosure, both APIs have implemented defenses
to prevent our attack or make it more expensive.
We extract the embedding layer of several OpenAI models with a  mean squared error
of $10^{-4}$ (up to unavoidable symmetries).
We apply a limited form of our attack to \gptthree{} at a cost of under \$200 USD and,
instead of recovering the full embedding layer, recover just the size of the embedding
dimension.
%

\textbf{Responsible disclosure.~~}
We shared our attack with all services we are aware of that are vulnerable to this attack.
We also shared our attack with several other popular services, even if they were
not vulnerable to our specific attack, because variants of our attack may
be possible in other settings.
We received approval from OpenAI prior to extracting the parameters of the last layers of their models,
worked with OpenAI to confirm our approach's efficacy, and then deleted all data associated with
the attack.
In response to our attack,
OpenAI and Google have both modified their APIs to introduce
mitigiations and defenses (like those that we suggest in Section~\ref{sec:defense}) to make it more difficult for adversaries to perform this attack.

\section{Related Work}

Model stealing attacks \cite{tramer2016stealing} aim to recover the functionality of a black-box model, and optimize for one of two objectives \cite{jagielski2020high}:
\begin{enumerate}[topsep=0pt, itemsep=-1pt]
    \item \emph{Accuracy}: the stolen model $\hat{f}$ should match the performance of the target model $f$ on some particular data domain.
    For example, if the target is an image classifier, we might want the stolen model to match the target's overall accuracy on ImageNet.
    \item \emph{Fidelity}: the stolen model $\hat{f}$ should be functionally equivalent to the target model $f$ on all inputs. That is, for any valid input $\x$, we want $\hat{f}(\x) \approx f(\x)$.
\end{enumerate}

In this paper, we focus on high-fidelity attacks.
Most prior high-fidelity attacks exploit specific properties
of deep neural networks with ReLU activations.
\citet{milli2019model} first showed that if an attacker can compute \emph{gradients} of a target two-layer ReLU model, then they can steal
a nearly bit-for-bit equivalent model.
\citet{jagielski2020high} observed that if the attacker only has query access to model outputs, they can approximate gradients with finite differences.
Subsequent work extended these attacks to efficiently extract deeper ReLU models~\citep{carlini2020cryptanalytic, rolnick2020reverse, shamir2023polynomial}.
Unfortunately, none of these approaches scale to production
language models, because they (1) accept tokens as inputs (and so performing finite
differences is intractable); (2) use activations other than ReLUs; (3) contain architectural components such as attention, 
layer normalization, residual connections, etc.~that current attacks cannot handle; (4) are orders-of-magnitude larger than prior extracted models; and (5) expose only limited-precision outputs.

Other attacks aim to recover more limited information, or assume a stronger adversary.
\citet{wei2020leaky} show that an adversary co-located on the same server as the LLM can recover the sizes of all hidden layers.
\citet{zanella2021grey} assume a model with a public pretrained encoder and a private final layer, and extract the final layer; our Section~\ref{sec:attack_detail} is quite similar to their method.
Others 
have attempted to recover model sizes by correlating performance
on published benchmarks with model sizes in academic papers~\cite{gao2021sizes}.

\vspace{-.1cm}
\section{Problem Formulation}

We study models that take a sequence of tokens drawn from a vocabulary $\mathcal{X}$ as input. 
Let $\calP\br{\calX}$ denote the space of probability distributions over $\calX$. We study parameterized models $f_\theta : \mathcal{X}^N \to \calP\br{\calX}$ that produce a probability distribution over the next output token,
given an input sequence of $N$ tokens. The model has the following structure:
\vspace{-.1cm}
\begin{align}
 f_\theta(\x) = \softmax(\Et\cdot  g_\theta(\x)),    \label{eq:transformer}
\end{align}
where $g_\theta : \mathcal{X}^N \to \R^h$
is another parameterized model that computes hidden states, 
$\Et$ is an $l \times h$ dimensional matrix (the \emph{embedding projection matrix}),
and $\softmax : \R^l \to [0,1]^l$ is the softmax function applied to the resulting \emph{logits}:
\vspace{-.2cm}
\[
\softmax(\z) = \left[\frac{e^{\z_{1}}}{\sum_{i=1}^l e^{\z_{i}}}, \dots, \frac{e^{\z_{l}}}{\sum_{i=1}^l e^{\z_{i}}} \right] \;.
\]
Note that the hidden dimension size is much smaller than the size of the token dictionary, i.e., $h \ll l$.
For example, LLaMA~\citep{touvron2023llama} chooses $h \in \{4096, 5120, 6656, 8192\}$ and $l = 32{,}000$,
and there is a recent trend towards increasingly large token sizes;
GPT-4, for example, has a $\approx$100,000 token vocabulary.
%
%

\custompar{Threat model.}
Throughout the paper, we assume that the adversary does not have any additional knowledge about the model parameters. We assume access to a model $f_\theta$,
hosted by a service provider and made available to users through
a query interface (API) $\api$.
We assume that $\api$ is a perfect oracle:
given an input sequence $\x$, it produces
$y=\api\br{\x}$ without leaking any other information about $f_\theta$ than what can be inferred from $(\x, y)$.
For example, the adversary cannot
infer anything about $f_\theta$ via timing side-channels or other details of the implementation of the query interface.

Different open-source and proprietary LLMs offer APIs with varying capabilities, which impact the ability to perform model extraction attacks and the choice of attack algorithm. 
A summary of the different APIs we study, and our motivation for doing so, is presented in Table \ref{tab:threat models_summary}. The logits API is a strawman threat model where the API provides logits for all tokens in the response to a given prompt. We begin with this toy setting, as the attack techniques we develop here can be reused in subsequent sections, where we will first reconstruct the logits from more limited information (e.g., log-probabilities for only the top few tokens) and then run the attack.

\begin{table}
\small
\centering
\caption{Summary of APIs}
\label{tab:threat models_summary}
\vspace{5pt}
\setlength{\tabcolsep}{4pt}
\begin{tabular}{@{} l l @{}}
\toprule
API & Motivation \\
\midrule
All Logits \S\ref{sec:mainwarmup} & Pedagogy \& basis for next attacks  \\
Top Logprobs, Logit-bias \S\ref{sec:logit_bias_attacks} & Current LLM APIs (e.g., OpenAI)\\
No logprobs, Logit-bias \S\ref{sec:logprob-free-appendix} & Potential future constrained APIs \\
\bottomrule
\end{tabular}
\end{table}
\setlength{\textfloatsep}{8pt}




\section{Extraction Attack for Logit-Vector APIs}\looseness=-1
\label{sec:mainwarmup}

In this section, we assume the
adversary can directly view the logits that feed into the softmax function for every token in the vocabulary (we will later relax this assumption), i.e., 
\[\api\br{\x} \gets \Et \cdot g_\theta\br{\x} \;.\]
We develop new attack techniques that allow us to perform high-fidelity extraction of (a small part of) a transformer. Section \ref{sec:attack_warmup} demonstrates how we can identify the hidden dimension $h$ using the logits API and Section \ref{sec:attack_detail} presents an algorithm that can recover the matrix $\Et$.

\subsection{Warm-up: Recovering Hidden Dimensionality}\label{sec:attack_warmup}

We begin with a simple attack that allows an adversary to recover the 
size of the hidden dimension of a language model by making queries to the oracle $\api$  (Algorithm~\ref{algo:hidden}).
The techniques we use to perform this attack will be the foundation for 
attacks that we further develop to perform complete extraction of the final
embedding projection matrix.

\setlength{\textfloatsep}{8pt}
\begin{algorithm}[h]
\caption{Hidden-Dimension Extraction Attack}
\label{algo:hidden}
\begin{algorithmic}[1]
\Require Oracle LLM $\api$ returning \textbf{logits} 

\State Initialize $n$ to an appropriate value greater than $h$
\State Initialize an empty matrix $\Q = \mathbf{0}^{n \times l}$
\For{$i = 1$ to $n$}
    \State $\x_i \gets \texttt{RandPrefix}()$ \Comment{\small{Choose a random prompt}}
    \State $\Q_i \gets \api(\x_i)$
\EndFor
\State $\lambda_1 \geq \lambda_2 \geq \dots \geq \lambda_n \gets \text{SingularValues}(\Q)$

\State $\text{count} \gets \argmax_i {\log \lVert\lambda_{i}\rVert - \log \lVert\lambda_{i+1}\rVert}$

\State \Return $\text{count}$
\end{algorithmic}
\end{algorithm}

\custompar{Intuition.}
Suppose we query a language model on a large number of different random prefixes.
Even though each output logit vector is an $l$-dimensional vector,
they all actually lie in a $h$-dimensional subspace because the embedding projection layer up-projects from $h$-dimensions.
Therefore, by querying the model ``enough'' (more than $h$ times) we will eventually observe
new queries are linearly dependent of past queries.
We can then compute the dimensionality of this subspace (e.g., with SVD)
and report this as the hidden dimensionality of the model.

\custompar{Formalization.}
The attack is based on the following straightforward mathematical result:
\begin{lemma}\label{lem:h_recovery}
Let $\Q\br{\x_1, \ldots \x_{n}} \in \R^{l \times n}$ denote the matrix with columns  $\api\br{\x_1}, \ldots, \api\br{\x_{n}}$ of query responses from the logit-vector API. Then 
\[h \geq \text{rank}\br{\Q\br{\x_1, \ldots \x_{n}}}.\]
Further, if the matrix with columns $g_\theta\br{\x_i}$ ($i=1, ..., n$) has rank $h$ and $\Et$ has rank $h$, then 
\[h = \text{rank}\br{\Q\br{\x_1, \ldots \x_{n}}}.\]
\end{lemma}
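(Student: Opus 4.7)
The plan is to exploit the factorization $\api(\x_i) = \Et \cdot g_\theta(\x_i)$ that is given by \eqref{eq:transformer} (with the softmax stripped off, per the logit-vector API assumption). Collecting these $n$ column vectors side by side, the response matrix factors as $\Q(\x_1, \ldots, \x_n) = \Et \cdot G$, where $G \in \R^{h \times n}$ is the matrix whose $i$-th column is $g_\theta(\x_i)$. The entire lemma then reduces to a standard rank inequality for products of matrices applied to this factorization.

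For the first inequality, I would invoke the fact that $\text{rank}(AB) \leq \min(\text{rank}(A), \text{rank}(B))$. Since $\Et \in \R^{l \times h}$ has rank at most $h$, we immediately obtain $\text{rank}(\Q) \leq \text{rank}(\Et) \leq h$, which is exactly the first claim. No assumptions on the $\x_i$ are needed here.

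For the equality claim, the added hypotheses say that $\Et$ has full column rank $h$ and $G$ has full row rank $h$. I would argue directly: the column space of $\Q = \Et G$ equals $\Et \cdot \text{colspan}(G)$, and since $G$ has rank $h$ its columns span all of $\R^{h}$, so $\text{colspan}(\Q) = \text{colspan}(\Et)$, which has dimension $h$ by assumption. Equivalently, one can show the nullspace of $\Q^\top$ equals that of $\Et^\top$: if $\Et^\top v = 0$ then $\Q^\top v = G^\top \Et^\top v = 0$, and conversely if $G^\top \Et^\top v = 0$ then $\Et^\top v$ lies in the nullspace of $G^\top$, which is trivial because $G$ has rank $h$. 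Either route gives $\text{rank}(\Q) = h$.

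There is essentially no obstacle: the lemma is a bookkeeping consequence of the low-rank factorization built into the model, and the only subtle point is to state the product-rank inequality carefully and to use both full-rank hypotheses (the $\Et$ hypothesis is needed so the image has dimension $h$, while the $G$ hypothesis is needed so the factorization does not collapse the image). The interesting content of the surrounding algorithm lies not in this lemma but in verifying operationally that random prompts yield a $G$ of rank $h$, and in detecting the rank numerically via the spectral gap in $\text{SingularValues}(\Q)$; the lemma itself is the clean algebraic justification for why that gap exists.
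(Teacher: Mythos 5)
Your proposal is correct and follows exactly the paper's route: factor $\Q = \Et \cdot \H$ with $\H$ the matrix of hidden states, apply the product-rank inequality for the upper bound, and use the two full-rank hypotheses for equality. You simply spell out the details (column-space/nullspace argument) that the paper leaves implicit.
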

\begin{proof}

We have
$\Q = \Et \cdot \H$, where $\H$ is a $h \times n$ matrix whose columns are $g_\theta(\x_i)$ ($i = 1, \ldots, n$). Thus,
$h \geq \textrm{rank}\br{\Q}$. Further, if $\H$ has rank $h$ (with the second assumption), then $h = \textrm{rank}\br{\Q}$.
\end{proof}

\custompar{Assumptions.} In \Cref{lem:h_recovery}, we assume that both the matrix with columns $g_\theta\br{\x_i}$ and the matrix $\Et$ have rank $h$. These matrices have either $h$ rows or $h$ columns, so both have rank at most $h$.
Moreover, it is very unlikely that they have rank $<h$: this would require the distribution of $g_\theta\br{\x}$ to be fully supported on a subspace of dimension $<h$ across all $\x_i$ we query, or all $h \ll l$ columns of $\Et$ to lie in the same $(h-1)$ dimensional subspace of $\mathbb{R}^l$ (the output space of logits).
In practice we find this assumption holds for all larger models (\Cref{tab:model_comparison_open_source}) and when different normalization layers are used (\Cref{subapp:normalization_layers}).


\custompar{Practical considerations.} Since the matrix $\Q$ is not computed over the reals, but over 
floating-point numbers (possibly with precision as low as 16-bits or 8-bits
for production neural networks),
we cannot naively
take the rank to be the 
number of linearly independent rows. Instead, we use a practical 
\emph{numerical rank} of $\Q$,
where we order the singular values $\lambda_1 \ge \lambda_2 \ge \cdots \ge \lambda_n$, and identify the largest \emph{multiplicative} gap $\frac{\lambda_i}{\lambda_{i+1}}$ between consecutive singular values. 
A large multiplicative gap arises when we switch from large ``actual'' singular values to small singular values that arise from numerical imprecision.
Figure \ref{fig:why_paramfree_works} shows these gaps.
Algorithm~\ref{algo:hidden} describes this attack.
%


\begin{figure}[t]
    \centering
    \includegraphics[scale=.75]{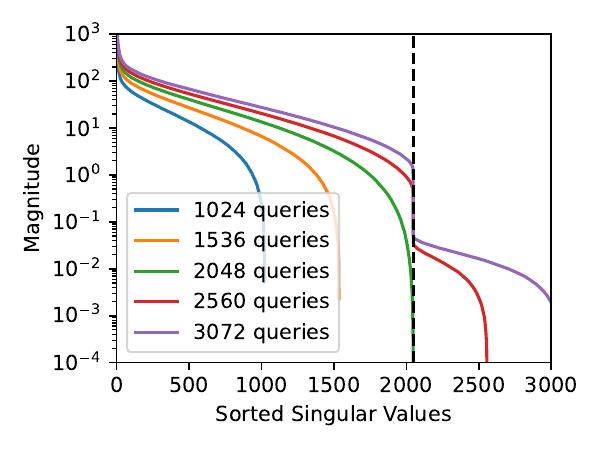}
    \vspace{-0.45cm}
    \caption{SVD can recover the hidden dimensionality of a 
    model when the final output layer dimension is greater than the hidden dimension. 
    Here we extract the hidden dimension (2048) of the Pythia 1.4B model.
    We can precisely identify the size by obtaining slightly over 2048 full logit vectors.}
    \label{fig:why_svd_works}
\end{figure}

\textbf{Experiments.~} In order to visualize the intuition behind this attack,
Figure~\ref{fig:why_svd_works} illustrates an attack against the Pythia-1.4b LLM.
Here, we plot the magnitude of the singular values of $\Q$ as we send an increasing number $n$ of queries to the model.
When we send fewer than 2048 queries it is impossible to identify
the dimensionality of the hidden space.
This is because $n < h$, and so the $n \times l$ dimensional matrix $\Q$ has
full rank and $n$ nontrivial singular values.
But once we make more than $2048$ queries to the model, and thus $n > h$,
the number of numerically significant singular values does not increase further; it is capped at exactly $2048$.

In Figure~\ref{fig:why_paramfree_works} we plot the difference (in log-space)
between subsequent singular values.
As we can see, the largest difference occurs at (almost exactly) the 2048th singular value---the true
hidden dimensionality of this model.

We now analyze the efficacy of this attack across a wider range of models: GPT-2 \cite{radford2019gpt2} Small and XL,
Pythia \cite{biderman2023pythia} 1.4B and 6.9B, and LLaMA \cite{touvron2023llama} 7B and 65B.
The results are in Table~\ref{tab:model_comparison_open_source}:
our attack recovers the embedding size nearly perfectly, with an error of 0 or 1 in five out of six cases.

\begin{figure}[t]
    \centering
    \includegraphics[scale=.75]{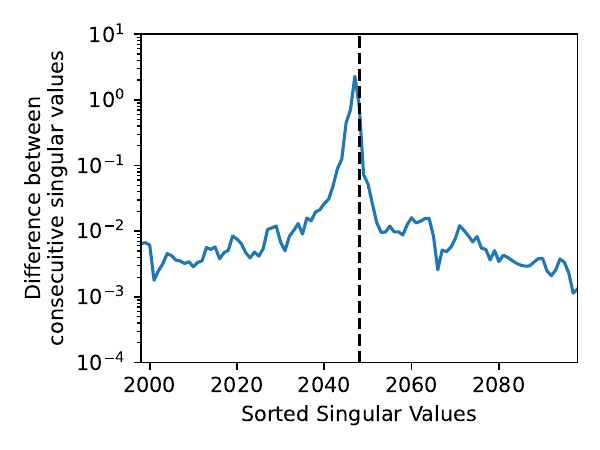}
    \vspace{-0.45cm}
    \caption{Our extraction attack recovers the hidden dimension 
    by identifying
    a sharp drop in singular values, visualized as a spike in the difference between consecutive singular values. On Pythia-1.4B, a 2048 dimensional model, the spike occurs at 2047 
    values.}
    \label{fig:why_paramfree_works}
    \vspace{1em}
\end{figure}

Our near perfect extraction has one exception: GPT-2 Small.
On this 768 dimensional model, our attack reports a hidden dimension of 757.
In Appendix~\ref{sec:whygpt2} we show that this ``failure'' is caused by GPT-2 actually having an effective
hidden dimensionality of $757$ despite having $768$ dimensions.

\paragraph{Cheaper Dimension Extraction}
Note that $l$ being exactly equal to the vocabulary size is not crucial.
Formally, taking only $l' < l$ rows of $\Q$ does not change the number of nonzero singular values, 
except in the unlikely case that the resulting submatrix is of smaller rank.
Hence, we can choose a subset of $l'$ tokens and extract the dimension from logits on these tokens alone, as long as $l' > h$.

\subsection{Full Layer Extraction (Up to Symmetries)}\label{sec:attack_detail}

We extend the attack from the prior section to recover the 
final output projection matrix $\Et$ that maps from the final hidden layer to the output logits.

\textbf{Method:~} Let $\mathbf{Q}$ be as defined in Algorithm \ref{algo:hidden}. Now rewrite $\Q = \U \cdot \mathbf{\Sigma} \cdot \Vt$ with SVD.
Previously we saw that the number of large enough singular values corresponded to the dimension of the model.
But it turns out that the matrix $\U$ actually directly represents (a linear transformation of) the final layer!
Specifically, we can show that $\U \cdot \mathbf{\Sigma}=\Et \cdot \mathbf{G}$ for some $h \times h$ matrix $\mathbf{G}$ in the following lemma.

\begin{lemma}
\label{lemma:recover-E-up-to-rotation}
In the logit-API threat model, under the assumptions of Lemma ~\ref{lem:h_recovery}: \textnormal{(i)}
The method above recovers $\tEt = \Et \cdot \mathbf{G}$ for some $\mathbf{G} \in \R^{h \times h}$; \textnormal{(ii)}
With the additional assumption that $g_{\theta}(p)$ is a transformer with residual connections, it is impossible to extract $\Et$ exactly.
\end{lemma}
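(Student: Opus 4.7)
For part~(i), I start from the factorization $\Q = \Et \cdot \H$ used in the proof of Lemma~\ref{lem:h_recovery}, where $\H \in \R^{h \times n}$ has rank $h$. The SVD $\Q = \U \mathbf{\Sigma} \Vt$ has exactly $h$ nonzero singular values by Lemma~\ref{lem:h_recovery}; I truncate $\U$ to its first $h$ columns and $\mathbf{\Sigma}$ to its leading $h \times h$ block, and set $\tEt := \U \mathbf{\Sigma}$. Then $\tEt$ is an $l \times h$ matrix of rank $h$ whose columns span the column space of $\Q$, which (because $\H$ has full row rank) coincides with the column space of $\Et$. Two $l \times h$ matrices of rank $h$ with the same column space differ by right-multiplication by an invertible $h \times h$ matrix, so there exists $\mathbf{G}$ with $\tEt = \Et \cdot \mathbf{G}$, proving~(i).

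For part~(ii), the plan is to exhibit a family of observationally equivalent parametrizations whose embedding-projection matrices differ. Fix any invertible $\mathbf{A} \in \R^{h \times h}$ and re-basis the residual stream by $\mathbf{A}$: left-multiply the input token embedding by $\mathbf{A}$; right-multiply every weight matrix that reads from the residual stream (attention query, key, and value matrices, MLP up-projections, and the unembedding $\Et$) by $\mathbf{A}^{-1}$; and left-multiply every weight matrix that writes back into the residual stream (attention output projections and MLP down-projections) by $\mathbf{A}$. Residual connections are what make this covariance propagate end-to-end, because each layer merely accumulates an already-conjugated update. A short induction on layers then shows that the hidden state at every position becomes exactly $\mathbf{A} \cdot g_\theta(\x)$, so the produced logits
\[
\br{\Et \mathbf{A}^{-1}} \cdot \br{\mathbf{A} \, g_\theta(\x)} \;=\; \Et \cdot g_\theta(\x)
\]
agree with the original model on every input. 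Hence $\api$ returns identical values for the two models, which makes exact recovery of $\Et$ information-theoretically impossible.

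The main obstacle I anticipate is that production transformers typically also contain normalization layers (LayerNorm or RMSNorm) that are not invariant under arbitrary invertible $\mathbf{A}$. The clean workaround is to observe that (ii) only requires producing \emph{one} nontrivial symmetry, so it suffices to restrict $\mathbf{A}$ to the subgroup that commutes with the normalizations in use, e.g.\ orthogonal matrices for RMSNorm, and orthogonal matrices fixing the all-ones direction for LayerNorm. This restricted family is still an infinite continuous group and hence still suffices to rule out exact extraction; it also matches the ambiguity $\mathbf{G}$ in part~(i) up to the scaling that is absorbed into $\mathbf{\Sigma}$.
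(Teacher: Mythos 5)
Your proposal is correct and follows essentially the same route as the paper: part (i) via the compact SVD of $\Q = \Et\H$ (the paper writes the transformation explicitly as $\mathbf{G} = \H\V$, while you argue via equality of column spaces, which additionally shows $\mathbf{G}$ is invertible), and part (ii) via the residual-stream change-of-basis symmetry, restricted to orthogonal matrices (fixing the all-ones direction for LayerNorm) so that the reparametrization commutes with the normalization layers. The only presentational difference is that the paper works through the normalization case more explicitly, conjugating the input projections by the folded-in diagonal RMSNorm scale, a detail your "restrict to the commuting subgroup" remark subsumes.
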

\emph{Proof.~}
See Appendix~\ref{sec:proof_of_42}. \hfill$\square$
%



Note that we could also use $\Q = \Et \cdot \mathbf{G}$  for $n=l$. The SVD construction above gains numerical precision if $n > l$. 

\paragraph{Experiments.} For the six models considered previously, 
we evaluate the attack success rate by comparing the
root mean square (RMS) between our extracted matrix $\tEt = \U \cdot \mathbf{\Sigma}$ and the
actual weight matrix, after allowing for a $h \times h$ affine
transformation.
Concretely, we solve the least squares system $\tEt \cdot \mathbf{G} \approx \Et$ for $\mathbf{G}$, which reduces to $h$ linear least squares problems, each with $l$ equations and $h$ unknowns. Then, we report the RMS of $\Et$ and $\tEt \cdot \mathbf{G}$.
The results are in Table \ref{tab:model_comparison_open_source}.
%
%
As a point of reference, the RMS between a randomly initialized
model and the actual weights is $2\cdot10^{-2}$, over $100$--$500\times$
higher than the error of our reconstruction.

In Appendices \ref{sec:proof_of_42} and \ref{app:recovery-orthogonal}, we show that reconstruction is possible up an \emph{orthogonal} transformation (approximately $h^2/2$ missing parameters, as opposed to $h^2$ for reconstruction up to an affine transformation), and that this is tight under some formal assumptions.
However, we only have an efficient algorithm for reconstruction up to affine transformations.

\begin{table}
\small
\centering
\caption{Our attack succeeds across a range of open-source models,
at both stealing the model size, and also at reconstructing the output projection matrix (up to invariances; we show the root MSE).}
\label{tab:model_comparison_open_source}
\vspace{5pt}
\begin{tabular}{@{}l r r r@{}}
\toprule
Model & \hspace{-1.5em} Hidden Dim & Stolen Size & $\Et$ RMS \\
\midrule
GPT-2 Small  (fp32)  &  $768$ & 757 $\pm$ 1  & $4 \cdot 10^{-4}$ \\
GPT-2 XL     (fp32)  & $1600$ & 1599 $\pm$ 1  & $6 \cdot 10^{-4}$ \\
Pythia-1.4   (fp16)  & $2048$ & 2047 $\pm$ 1  & $3 \cdot 10^{-5}$  \\
Pythia-6.9   (fp16)  & $4096$ & 4096 $\pm$ 1  & $4 \cdot 10^{-5}$ \\
LLaMA 7B     (fp16)  & $4096$ & 4096 $\pm$ 2  & $8 \cdot 10^{-5}$ \\
LLaMA 65B    (fp16)  & $8192$ & 8192 $\pm$ 2  & $5 \cdot 10^{-5}$ \\
\bottomrule
\end{tabular}
\vspace{0.5em}
\end{table}

\section{Extraction Attack for Logit-Bias APIs}\label{sec:logit_bias_attacks}

The above attack makes a significant assumption: 
that the adversary can directly observe
the complete logit vector for each input.
In practice, this is not true: no production model we are aware of
provides such an API.
Instead, for example, they provide a way for users to get the top-$K$ (by logit) token log probabilities. 
In this section we address this challenge.

\subsection{Description of the API} \label{Sec:logitbiasAPI}
In this section we develop attacks for APIs that
return log probabilities for the top $K$ tokens (sorted by logits), 
and where the user can specify a real-valued bias $\bias \in \R^{|\calX|}$ (the ``logit bias'')
to be added to the logits for specified tokens before the softmax, i.e., 
\begin{align*}
    & \api(\x, \bias) \gets  \topk\br{\logsoftmax\br{\Et g_\theta(\x) + \bias}} \\
    &=\! \topk\!\left(\!\Et g_\theta(\x) \!+\! \bias \!-\! \log\!\left(\!\sum_i \exp\!\left(\!\Et g_\theta(\x) \!+\! \bias\right)_i\!\right)\!\cdot\!\mathbf{1}\!\right)\!.
\end{align*}
where $\topk\br{\z}$ returns the $K$ highest entries of $\z \in \R^l$ and their indices.
Many APIs (prior to this paper) provided such an option for their 
state-of-the-art models \cite{openai:create-chat-completion,google:changelog-1.38.0}.
In particular, the OpenAI API supports modifying logits for at most $300$ tokens,
  and the logit bias for each token is restricted to the range $[-100, 100]$~\citep{openai:logit-bias}.

All that remains is to show that we can uncover the full logit vector for distinct prompt queries through this API. In this section, we develop techniques for this purpose. 
Once we have recovered multiple complete logit vector, we can run the attack from Section \ref{sec:attack_detail} without modification.


\subsection{Evaluation Methodology}

Practical attacks must be \emph{efficient}, both to keep the cost of extraction manageable and to bypass any rate limiters or other filters in the APIs. We thus begin with two cost definitions that we use to measure the efficacy of our attack.
%

\textbf{Token cost:} the number of tokens the adversary sends to (or receives
from) the model during the attack.
Most APIs charge users per-token, so
this metric represents the monetary cost of an attack
(after scaling by the token cost).%
%

\textbf{Query cost:} the total duration of the attack.
Most APIs place a limit on the number of queries an adversary can
make in any given interval, and so some attacks may
be faster but cost more (by sending more tokens per query).

In the remainder of this section we develop several attacks under varying
attack assumptions and optimizing for either \emph{token cost}, \emph{query cost},
or both.

\subsection{Extraction Attack for Top-5 Logit Bias APIs	}\label{sec:topk-extraction-main}
We develop a technique to compute the logit vector for any prefix $p$ via a sequence of queries with varying logit biases.

To begin, \textbf{suppose that the API returned the top $K$ logits}. 
Then we could recover the complete logit vector for an arbitrary prompt $p$ by cycling through different choices for the logit bias and measuring the top-$k$ logits each time. In particular, for an API with top-5 logits we can send a sequence of queries 
\[
\api(p, \logitbias_{k}\!=\!\logitbias_{k+1}\!=\!\dots\!=\!\logitbias_{k+4}\!=\!B) \text{, for } k\! \in\! \{\!0,\! 5,\! 10, \!\dots\!,\! |\mathcal{X}|\!\}
\]
with a large enough $B$.
Each query thus promotes five different tokens $\{k, k+1, \ldots, k+4\}$ into 
the top-5, which allows us to observe their logits. 
By subtracting the bias $B$ and merging answers from all of these queries, we recover the entire logit vector.

Unfortunately, we cannot use this attack directly
because all production APIs we are aware of return \emph{logprobs} (the log of the softmax output of the model) instead of the logits $z_i$.
%
%
The problem now is that when we apply a logit bias $B$ to the $i$-th token and observe that token's logprob, 
we get the value
\[
y_i^B = z_i + B - \log \big(\sum_{j\neq i} \exp(z_j) + \exp(z_i + B) \big)
\]
where $z_i$ are the original logits. 
We thus get an additional bias-dependent term which we need to deal with. We propose two approaches.

Our first approach relies on a common ``reference'' token that lets us learn the relative difference between all logits (this is the best we can hope for, since the softmax is invariant under additive shifts to the logits).
Suppose the top token for a prompt is $R$, and we want to learn the relative difference between the logits of tokens $i$ and $R$. We add a large bias $B$ to token $i$ to push it to the top-5, and then observe the logprobs of both token $i$ and $R$. We have:
\[
y_R^B - y_i^B - B = z_R - z_i \,.
\]
Since we can observe 5 logprobs, we can compare the reference token $R$ to four tokens per query, by adding a large bias that pushes all four tokens into the top 5 (along with the reference token).
We thus issue a sequence of queries 
\[
\api(p, \logitbias_{i}=\logitbias_{i+1}=\logitbias_{i+2}=\logitbias_{i+3}=B)
\]
for $i \in \{0, 4, 8, \cdots, |\mathcal{X}|\}$.
This recovers the logits up to the free parameter $z_R$ that we set to $0$.

%

\textbf{Query cost.}
This attack reveals the value of K-1 logits with each query to the model
(the $K$-th being used as a reference point),
for a cost of $1/(K-1)$ queries per logit.

In Appendix \ref{sec:linear_reconstruction} we present a second, more sophisticated method that allows us to recover $K$ logits per query,  i.e., a cost of $1/K$, by viewing each logprob we receive as a linear constraint on the original logits.

\textbf{Token cost.}
%
%
Recall that our attack requires that we learn the logits for several distinct prompts;
and so each prompt must be at least one token long.
Therefore, this attack costs at least two tokens per query (one input and one output), or a cost of
$1/2$ for each token of output.
But, in practice, many models (like \gptturbo{}) include a
few tokens of overhead along with every single query. 
This increases the token cost per logit to ${2 + \Delta} \over 4$ where $\Delta$ is the
number of overhead tokens; for \gptturbo{} we report $\Delta=7$.

\paragraph{An improved cost-optimal attack.}
It is possible to generalize the above attack to improve \emph{both}
the query cost and token cost.
Instead of issuing queries to the model that reveal 4 or 5 logit values for a single generated token, we might instead hope to be
able to send a multi-token query
$[p_0 \quad p_1 \quad p_2 \dots p_n]$
and then ask for the logprob vector for
each prefix of the prompt
$[p_0]$, $[p_0 \quad p_1]$, $[p_0 \quad p_1 \quad p_2]$ etc.
OpenAI's API did allow for queries of this
form in the past, by providing logprobs for
\emph{prompt} tokens as well as generated tokens by combining the \nicett{logprob} and \nicett{echo} parameters;
this option has since been removed.

Now, it is only possible to view logprobs of \emph{generated} tokens.
And since only the very last token is generated, we can only view four
logprobs for this single longer query.
This, however, presents a potential approach to reduce the query and token cost:
if there were some way to cause the model to emit
a specific sequence of tokens $[p_{n+1} \quad p_{n+2} \quad \dots \quad p_{n+m}]$, then we could inspect the logprob vector of each generated token.

We achieve this as follows: we fix a token $x$ and four other tokens,
and force the model to emit $[x \quad x \quad \ldots \quad x]$.
Instead of supplying a logit bias of $B$ for each of the five tokens,
we supply a logit bias of $B$ for token $x$,
and $B' < B$ for the other four tokens.
If $B'$ is large enough so that the other tokens will be brought into the
top-5 outputs, we will still be able to learn the logits for those tokens.
As long as $B'$ is small enough so that the model will always complete the
initial prompt $p_0$ with token $x$ (and not any other), then we will
be able to collect the logits on several prompts of the form $[p_0 \quad x \quad x \quad \ldots \quad x]$.

\textbf{Analysis.}
It is easy to see that the query cost of this attack is 
$1 \over {4m}$, where $m$ is the expansion factor.
Further, since each query requires $1+m$ tokens, the token cost is $1+m \over 4m$.
(Or, $1+m+\Delta$ if the API has an overhead of $\Delta$ tokens.)
Note that if $m=1$, i.e., there is no expansion, this attack reduces
to our first attack and the analysis similarly gives 
a query cost of $\sfrac{1}{4}$ and a token cost of $\sfrac{1}{2}$.

\begin{table*}
\centering
\begin{threeparttable}[]
\small
\centering
\caption{Attack success rate on five different black-box models}
\label{tab:model_comparison}
\begin{tabular}{@{} l rrr ccr @{}}
\toprule
& \multicolumn{3}{c}{Dimension Extraction} & \multicolumn{3}{c}{Weight Matrix Extraction} \\
\cmidrule(lr){2-2} \cmidrule(lr){2-4} \cmidrule(lr){5-7}
Model & Size & \# Queries & Cost (USD) & RMS & \# Queries & Cost (USD) \\
\midrule
OpenAI \ada{}            & \makebox[0.8em][l]{}$1024$\,\checkmark\makebox[0.8em][l]{} & $<2 \cdot 10^6$ & \$1 
& $5 \cdot 10^{-4}$ & $<2 \cdot 10^7$\makebox[1.1em][l]{} & \$4\makebox[0.8em][l]{}  \\
OpenAI \babbage{}        & \makebox[0.8em][l]{}$2048$\,\checkmark\makebox[0.8em][l]{} & $<4 \cdot 10^6$ & \$2
& $7 \cdot 10^{-4}$ & $<4 \cdot 10^7$\makebox[1.1em][l]{} & \$12\makebox[0.8em][l]{} \\
OpenAI \babbagetwo{}   & \makebox[0.8em][l]{}$1536$\,\checkmark\makebox[0.8em][l]{} & $<4 \cdot 10^6$ & \$2
& $^\dagger$ & $<4 \cdot 10^6$ \makebox[0.9em][l]{$^{\dagger+}$} & \$12\makebox[0.8em][l]{} \\
OpenAI \gptturboinstruct{}  & \makebox[1.7em][l]{}$^*$\,\,\,\checkmark\makebox[0.8em][l]{} & $<4 \cdot 10^7$ & \$200 
& $^\dagger$ & $<4 \cdot 10^8$ \makebox[0.9em][l]{$^{\dagger+}$} & \$2,000\makebox[0.8em][l]{$^{\dagger+}$} \\
OpenAI \gptturbonov{}      & \makebox[1.7em][l]{}$^*$\,\,\,\checkmark\makebox[0.8em][l]{} & $<4 \cdot 10^7$ & \$800 
& $^\dagger$ & $<4 \cdot 10^8$ \makebox[0.9em][l]{$^{\dagger+}$} & \$8,000\makebox[0.8em][l]{$^{\dagger+}$} \\
\bottomrule
\end{tabular}

    \begin{tablenotes}
    \footnotesize
    \item[\checkmark] Extracted attack size was exactly correct; confirmed in discussion with OpenAI.
    \item[$*$] As part of our responsible disclosure, OpenAI has asked that we do not publish this number.
    \item[$\dagger$] Attack not implemented to preserve security of the weights.
    \item[$+$] Estimated cost of attack given the size of the model and estimated scaling ratio.
  \end{tablenotes}
\end{threeparttable}
\end{table*}

\begin{table}
\small
\centering
\caption{Average error at recovering the logit vector for 
each of the logit-estimation attacks we develop. Our highest precision, and most efficient attack,
recovers logits nearly perfectly; other attacks approach this level of precision but at a higher
query cost.}
\label{tab:model_comparison_logit_estimation}
\vspace{5pt}
\setlength{\tabcolsep}{3pt}
\begin{tabular}{@{} l @{\hskip -5pt}c SS @{}}
\toprule

Attack & Logprobs & {Bits of precision} & {Queries per logit} \\
\midrule
logprob-4 (\S\ref{sec:topk-extraction-main}) & top-5 & 23.0 & 0.25 \\
logprob-5 (\S\ref{sec:linear_reconstruction}) & top-5 & 11.5 & 0.64 \\
logprob-1 (\S\ref{sec:binarized}) & top-1 & 6.1 & 1.0 \\
binary search (\S\ref{sec:logprobfree:basic}) & \xmark & 7.2 & 10.0 \\
hyperrectangle (\S\ref{sec:logprobfree:hyperrectangle-center}) & \xmark & 15.7 & 5.4 \\
one-of-n (\S\ref{sec:logprobfree:hyperrectangle-fancy}) & \xmark & 18.0 & 3.7 \\

\bottomrule
\end{tabular}
\end{table}

\subsection{Extraction Attack for top-1 Binary Logit Bias APIs}
\label{sec:binarized}
In light of our attacks, it is conceivable that model providers introduce  restrictions on the above API. We now demonstrate that an attack is possible even if the API only returns the top logprob ($K=1$ in the API from Section \ref{Sec:logitbiasAPI}), and the logit bias is constrained to only take one of two values. 

\textbf{API.} We place two following further restrictions on the logit bias API (Section \ref{Sec:logitbiasAPI}):
first, we set $K=1$, and only see the most likely token's logprob;
and second, each logit bias entry $\logitbias$ is constrained to be in $\{-1, 0\}$.
These constraints would completely prevent the attacks from the prior section. We believe this constraint is significantly tighter than any
practical implementation would define.

\textbf{Method.}
At first it may seem impossible to be able to learn any information
about a token $t$ if it is not already the most likely token.
However, note that if we query the model twice, once without any logit
bias, and once with a logit bias of $-1$ for token $t$, 
then the top token will be \emph{slightly} more likely with a bias of $-1$,
with exactly how slight depending on the \emph{value} of token $t$'s logprob.
Specifically, in Appendix~\ref{sec:binarized:proof} we show the logprob equals
$(\sfrac{1}{e} - 1)^{-1}(\exp(y_{\text{top}} - y'_{\text{top}}) - 1)$
where 
$y_{\text{top}}$ and $y'_{\text{top}}$ are the logprobs of the most likely
token when querying with logit bias of $0$ and $-1$.

\textbf{Analysis.}
This attack requires $1$ query and token per logprob extracted.
However, as we will show in the evaluation, this attack is much
less numerically stable than the previously-discussed attacks,
and so may require more queries to reach the same level of accuracy.

\section{Logprob-free attacks}

Due to space constraints, in Appendix~\ref{sec:logprob-free-appendix}, we show we can still extract logits \emph{without logprob access}, although with a higher cost. 

Intuitively, even without logprobs (as long as we still have logit bias)
it is possible to perform binary search to increase and decrease the logits
for every token until increasing any token by epsilon will make it the most
likely.
At this point, the logit bias vector corresponds directly to the (relative)
logits of each token relative to every other.

By performing the binary search one token at a time, we can develop an effective
(but inefficient) attack that requires $N\log({B \over \epsilon})$ where $N$ is the
number of logits, $B$ is an upper bound on the gap between any two logits and $\epsilon$ is the desired tolerance.

An improved attack is possible by noticing that it is possible to perform binary
search on multiple tokens in parallel.
Because the adversary gets to view the arg-max sampled token, by modifying multiple
tokens at the same time we can learn information faster and therefore improve attack
efficiency.

\section{Evaluation} 

We now study the efficacy of our practical
stealing attack.

\subsection{Logit Validation}

We begin by validating that the attacks developed in the prior sections
can effectively recover the full logit vector given a limited query interface.
In Table~\ref{tab:model_comparison_logit_estimation}
we report the average number of bits of agreement between the
true logit vector and the recovered logit vector, as well as
the (amortized) number of queries required to recover one full logit vector.

Generally, attacks that operate under stronger threat models have higher precision.
But theoretical improvements are not always practical: the 
theoretically stronger attack from \S\ref{sec:linear_reconstruction} that learns 5 logprobs
per query in practice requires more queries and recovers
logits with lower fidelity.
This is because this attack is numerically unstable: it requires a potentially
ill-conditioned matrix, and therefore can require re-querying the API after
adjusting the logit bias.
Our strongest logprob-free attack is highly efficient, and recovers $18$ bits of precision
at just $3.7$ queries per logit.
In Appendix~\ref{sec:howfaroptimal} we theoretically analyze how far this is from optimal,
and find it is within a factor of two.

\subsection{Stealing Parts of Production Models}
We now investigate our ability to steal production language
models, focusing on five of OpenAI's models available on 1 January 2024:
\ada{}, \babbage{}, \babbagetwo{}, \gptturboinstruct{}, and \gptturbonov{}.
We selected these models because these were the only production
models which
were able to receive advance permission to attempt an extraction
attack; we are exceptionally grateful to OpenAI for allowing
us to perform this research using their models.

Given the results from the prior section, we chose to implement the
improved 4-logprob attack (Section~\ref{sec:topk-extraction-main}) because it is both the most query efficient
attack and also the most precise attack.
Switching to a different attack algorithm would increase our total
experiment cost significantly, and so we do not perform these
ablation studies.

Both our hidden-dimension-stealing and entire-layer-stealing attack
worked for all five of these models.
The size we recover from the model perfectly matches the actual
size of the original model, as confirmed by OpenAI.
For the first three models, we report in Table~\ref{tab:model_comparison} the size we
recover because (1) the sizes of these models was never previously confirmed, but (2) they  have now been deprecated and so disclosing the size is not harmful.
In discussions with OpenAI, we decided
to withhold disclosure of the size of \gptturbo{} models,
but we confirmed with them that the number our attack reported was accurate.

When running the full layer-stealing attack, we confirmed that our extracted
weights are nearly identical to the actual weights, with error $<7 \cdot 10^{-4}$, up to an $h \times h$
matrix product as discussed previously.
Table~\ref{tab:model_comparison} reports the RMS between our extracted weight matrix and the 
actual model weights, after ``aligning'' the two by an $h \times h$ transform.

\section{Defenses}
\label{sec:defense}

It would be possible to prevent or mitigate this attack in a number of different ways,
albeit with loss of functionality.

\subsection{Prevention}

\textbf{Remove logit bias.~}
Perhaps the simplest defense would be to outright remove the logit bias parameter from the API.
Unfortunately, there are several legitimate use cases of this parameter.
For example, several works use logit bias in order to perform controlled or constrained generation \citep{jiang2023active,yang2021fudge}, to shift generation and mimic fine-tuning the model~\citep{liu2024tuning,mitchell2024emulator}, 
or other reasons \citep{ren2023selfevaluation,lee2022game}.
%

\textbf{Replace logit bias with a block-list.~}
Instead of offering a logit bias, model developers could replace it with a 
block-list of tokens the model is prohibited from emitting. 
This would support (some) of the functionality discussed in the prior section,
but would still prevent our attack.

\textbf{Architectural changes.~}
Instead of modifying the API, we could instead make changes to the model.
Our attack only works because the hidden dimension $h$ is less than
the output dimension $l$.
%
%
This suggests a natural architectural defense:
split the final layer into two layers, one that goes from $h \to t$ and then $t \to l$
where $t > l$ and a nonlinearity was placed in between.
This is not very efficient though, as the last linear layer
is large (quadratic in the vocabulary size).


\textbf{Post-hoc altering the architecture.~} We can also modify the hidden dimension $h$ for the final layer after the model is trained. In particular, we can expand the dimensionality of $\Et$ by concatenating extra weight vectors that are orthogonal to the original matrix. We set the singular values for these weights to be small enough to not materially affect the model's predictions, while also being large enough to look realistic. Then, during the model's forward pass, we concatenate a vector of random Gaussian noise to the final hidden vector $g_\theta(\x)$ before multiplying by $\Et$. Figure~\ref{fig:gpt2_with_spoofing} shows an example of this, where we expand GPT-2 small to appear as if it was 1024 dimensional instead of 768 dimensions. This misleads the adversary into thinking that the model is wider than it actually is.

\subsection{Mitigations}\label{sec:mitigations}

\textbf{Logit bias XOR logprobs.~}
Our attack is $10\times$ cheaper when an adversary can supply both a logit bias and also
view output logprobs.
This suggests a natural mitigation:
prohibit queries to the API that make use of \emph{both} logit bias
and logprobs at the same time.
This type of defense is common in both the security and machine learning community:
for example, in 2023 OpenAI removed the ability to combine both
\nicett{echo} and \nicett{logprobs},
but with either alone being allowed;
this defense would behave similarly.

\textbf{Noise addition.~}
By adding a sufficient amount of noise to the output logits of any given query,
it would be possible to prevent our attack.
However, logit-noise has the potential to make models less useful.
We perform some preliminary experiments on this direction in Appendix~\ref{app:quantnoise}.

\textbf{Rate limits on logit bias.~}
Our attack requires that we are able to learn at least $h$ logit
values for each prompt $p$.
One defense would be to allow logit-bias queries to the model,
but only allow $T = \tilde{h}/5$ logit bias queries for
any given prompt $p$ to prevent an adversary from learning
if a model has hidden dimension $\tilde{h}$ or smaller.

Unfortunately this has several significant drawbacks:
the threshold has to be independent of $h$ (or learning the threshold would reveal $h$);
the system would need to maintain state of all user queries to the API;
and preventing Sybil attacks requires a global pool of user queries, which can present significant privacy risks~\citep{debenedetti2023privacy}.

\textbf{Detect malicious queries.~}
Instead of preventing any queries that might leak model weights, an
alternate strategy could be to implement standard anti-abuse
tools to \emph{detect} any patterns of malicious behavior.
Several proposals of this form exist for prior machine learning attacks,
including model stealing \cite{juuti2019prada,pal2021stateful}
and adversarial examples \cite{chen2020stateful}.
Unfortunately, these defenses are often vulnerable to attack
\citep{feng2023stateful},
and so any mitigation here would need to improve
on the state-of-the-art to be truly robust.

\section{Future Work}

We are motivated to study this problem not because we expect to be able to
steal an entire production transformer model bit-for-bit,
but because we hope to conclusively demonstrate that model stealing attacks
are not just of academic concern but can be practically applied to the
largest production models deployed today.
We see a number of potential directions for improving on this attack.

\textbf{Breaking symmetry with quantized weights.}
Large production models are typically stored ``quantized'',
where each weight is represented in just 4 or 8 bits.
In principle, this quantization could allow an adversary to recover
a nearly bit-for-bit copy of the matrix $\Et$: while there exist
an infinite number of matrices $\Et \cdot \mathbf{G}$, only one will be discretized properly.
Unfortunately, this integer-constrained problem is NP-hard in general (similar problems are the foundation for an entire class of public key
cryptosystems).
But this need not imply that the problem is hard on all instances.

\textbf{Extending this attack beyond a single layer.}
Our attack recovers a single layer of a transformer.
We see no obvious methodology to extend it beyond just
a single layer, due to the non-linearity of the models.
But we invite further research in this area.

\textbf{Removing the logit bias assumption.}
All our attacks require the ability to pass a logit bias.
Model providers including Google and OpenAI provided this 
capability when we began the writing of this paper, but this could change.
(Indeed, it already has, as model providers
begin implementing defenses to prevent this attack.)
Other API parameters could give alternative avenues for learning logit information.
For example, unconstrained \nicett{temperature} and \nicett{top-k} parameters could also
leak logit values through a series of queries.
In the long run, completely hiding the logit information might be challenging due
both to public demand for the feature, and ability of adversaries to infer this
information through other means.

\textbf{Exploiting the stolen weights.} Recovering a model's embedding projection layer might improve other attacks against that model. 
Alternatively, an attacker could infer details about a provider's \emph{finetuning} API by observing changes (or the absence thereof) in the last layer. In this paper, we focus primarily on the model extraction problem and leave exploring downstream attacks to future work.

\textbf{Practical stealing of other model information.}
Existing high-fidelity model stealing attacks are ``all-or-nothing'' attacks
that recover entire models, but only apply to small ReLU networks.
We show that stealing partial information can be much more practical, even for state-of-the-art models.
Future work may find that practical attacks can steal many more bits of information about current proprietary models.

\section{Conclusion}

As the field of machine learning matures, and models transition from research artifacts
to production tools used by millions, the field of adversarial machine
learning must also adapt.
While it is certainly useful to understand the potential applicability of model stealing to
three-layer 100-neuron ReLU-only fully-connected networks, 
at some point it becomes important to understand to what extent attacks can be
actually applied to the largest production models.

This paper takes one step in that direction.
We give an existence proof that it is possible to steal one layer of a production language model.
While there appear to be no immediate
practical consequences of learning this layer, it represents the first time
that \emph{any} precise information about a deployed transformer model has been stolen. 
Two immediate open questions are (1) how hazardous these practical stealing attacks are and (2) whether they pose a greater threat to developers and the security of their models than black-box access already does via distillation or other approximate stealing attacks.

Our attack also highlights how small design decisions influence the overall security of a system.
Our attack works because of the seemingly-innocuous \emph{logit-bias} and \emph{logprobs} parameters
made available by the largest machine learning service providers, including OpenAI and Google---although both have now implemented mitigations to prevent this attack or make it more expensive.
Practitioners should strive to understand how system-level design
decisions impact the safety and security of the full product.

Overall, we hope our paper serves to further motivate the study of practical attacks
on machine learning models, in order to ultimately develop safer and more reliable systems.

\section*{Impact Statement}

This paper is the most recent in a line of work that demonstrates successful attacks on production models.
As such, we take several steps to mitigate the near-term potential harms of this research.
As discussed throughout the paper, we have worked closely with all affected products
to ensure that mitigations are in place before disclosing this work.
We have additionally sent advance copies of this paper to all \emph{potentially}
affected parties, even if we were unable to precisely verify our attack.

Long-term, we believe that openly discussing vulnerabilities that have practical
impact is an important strategy for ensuring safe machine learning.
This vulnerability exists whether or not we report on it.
Especially for attacks that are simple to identify
(as evidenced by the concurrent work of \citet{finlayson2024logits} that discovered this same vulnerability),
malicious actors are also likely to discover the same vulnerability whether or
not we report on it.
By documenting it early, we can ensure future systems remain secure.


\section*{Acknowledgements}

We are grateful to Andreas Terzis and the anonymous reviewers for comments on early
drafts of this paper.
%
%
We are grateful to Joshua Achiam for helping to write the code for post-hoc modifying the model architecture. 
We are also grateful to OpenAI for allowing us to attempt our
extraction attack on their production models.

\appendix

\bibliography{refs}
\bibliographystyle{icml2024}

\begin{appendix}
\onecolumn


\section{What's Going On With GPT-2 Small?}
\label{sec:whygpt2}

Our attack nearly perfectly extracts the model size of all models---except for GPT-2 Small
where our extracted size of 757 is off by 11 from the correct 768.
Why is this?

In Figure~\ref{fig:gpt2small} we directly inspect this model's final hidden activation vector
across $10,000$ different model queries and perform SVD of the resulting activation matrix.
We see that despite GPT-2 actually having 768 potential hidden neurons,
there are only $757$ different activation directions. 
Thus, while this model is \emph{technically} a 768 dimensional model, 
in practice it behaves as if it was a 757 (i.e, the rank of the embedding matrix is 757) dimensional model,
and our attack has recovered this effective size.

However, when running the model in higher float64 precision, we find that indeed all dimensions are used, but that the smallest dozen or so singular values are much smaller than the other singular values, an observation made by concurrent work \citep{spectral}.

\begin{figure}[h]
  \begin{minipage}{0.5\textwidth}
    \centering
    \includegraphics[width=\linewidth]{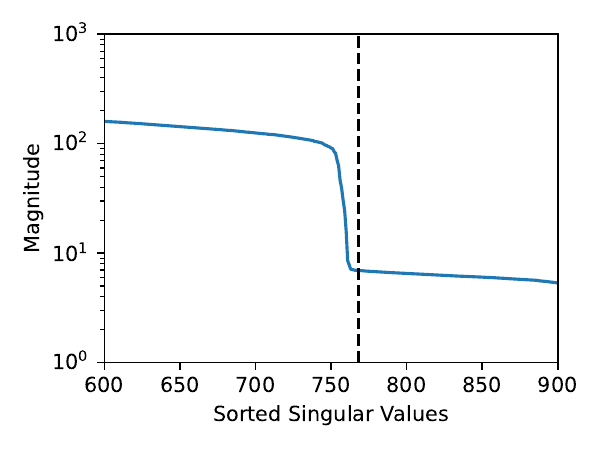}
    \captionof{subfigure}{Singular values of GPT-2 Small (default bfloat16 precision)}
  \end{minipage}
  \hfill
  \begin{minipage}{0.5\textwidth}
    \centering
    \vspace{-.1cm}
    \includegraphics[width=0.9\linewidth]{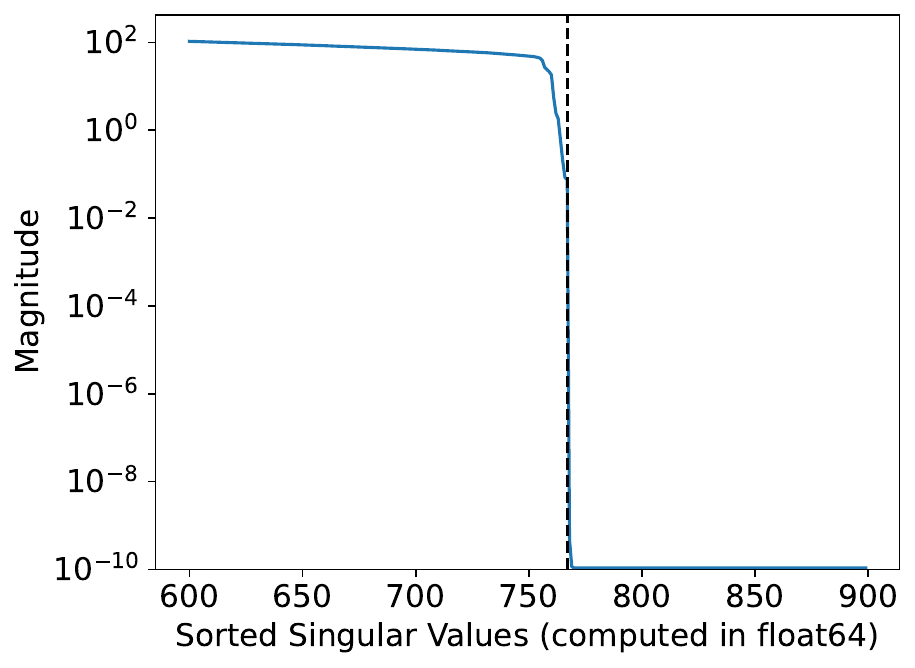}
    \captionof{subfigure}{Singular values of GPT-2 Small (higher float64 precision)
    \label{subfig:gpt2smallfloat64}}
  \end{minipage}
  \hfill
  \caption{Singular values of final hidden activations of GPT-2 Small.}
  \label{fig:gpt2small}
\end{figure}

\section{Accounting for Normalization Layers} 
\label{app:proofs}

\subsection{LayerNorm Does Not Affect Our Rank $h$ Assumption}
\label{subapp:normalization_layers}

Almost all LLMs that have publicly available architecture details use LayerNorm \citep{ba2016layer} or RMSNorm \citep{zhang2019root} just before applying the output projection $\Et$ \citep{stellabidermangooglesheet}. LayerNorm begins with a centering step, which projects its input onto a $(h-1)$-dimensional subspace (and RMSNorm does not). In theory, this could break our assumption that the rank of the matrix with columns $g_\theta\br{\x_i}$ ($i=1, ..., n$) has rank $h$ (\Cref{lem:h_recovery}). In practice, all LLMs we surveyed \citep{stellabidermangooglesheet} enabled the LayerNorm bias, which means the matrices had full rank $h$ (besides GPT-2 Small: see \Cref{sec:whygpt2}).

\subsection{Stealing Architectural Details About Normalization Layers}
\label{subsec:stealing-architecture}

\subsubsection{Theory}
\label{subsubsec:stealing-theory}

The difference between LayerNorm and RMSNorm (\Cref{subapp:normalization_layers}) could enable attackers to deduce whether models used LayerNorm or RMSNorm.  If an attacker recovered an initial logit-vector API query response $\api\br{\x_0}$, then they could apply \Cref{lem:h_recovery} to $\api\br{\x_1}-\api\br{\x_0}, \ldots, \api\br{\x_{n}}-\api\br{\x_0}$.\footnote{Throughout this appendix section, we assume the sum of logit outputs is always 0. We can calculate centered logits from logprobs by subtracting the mean logits across the vocab dimension.} From the description of the API at the top of  
\Cref{sec:attack_warmup}, 
it follows that $\api\br{\x_i} - \api\br{\x_0} = \Et ( g_\theta\br{\x_i} - g_\theta\br{\x_0} )$. This subtraction of $g$ terms occurs immediately after LayerNorm, so cancels the LayerNorm bias term. Hence, if we apply the \Cref{lem:h_recovery} attack with this subtraction modification to a model using LayerNorm, then the resultant `$h$' output will be smaller by 1 (due to \Cref{subapp:normalization_layers}). This would imply the model used LayerNorm rather than RMSNorm, because RMSNorm does not project onto a smaller subspace and so would not have a decrease in `$h$' value if we were to use this subtraction trick.

\subsubsection{Results}
\label{subsubsec:stealing-results}

To confirm that the method from \Cref{subsubsec:stealing-theory} works, we test whether we can detect whether the GPT-2, Pythia and LLAMA architectures use LayerNorm or RMSNorm from their logit outputs alone. We found that the technique required two adjustments before it worked on models with lower than 32-bit precision (it always worked with 32-bit precision). i) We do not subtract $\api\br{\x_0}$ from logits queries, but instead subtract the mean logits over all queries, i.e. $\frac1n \sum_{i=1}^n \api\br{\x_i}$. Since the average of several points in a common affine subspace still lie on that affine subspace, this doesn't change the conclusions from \Cref{subsubsec:stealing-theory}. ii) We additionally found it helped to calculate this mean in lower precision, before casting to 64-bit precision to calculate the compact SVD.

The results are in \Cref{fig:stealing-norm}. We plot the singular value magnitudes (as in \Cref{fig:why_svd_works}) and show that \textbf{there is a drop in the $h$th singular value for the architectures using LayerNorm, but not for architecture using RMSNorm}:

\begin{figure}[h!]
  \begin{minipage}{0.3\textwidth}
    \centering
    \vspace{-.4cm}
    \includegraphics[width=1.15\linewidth]{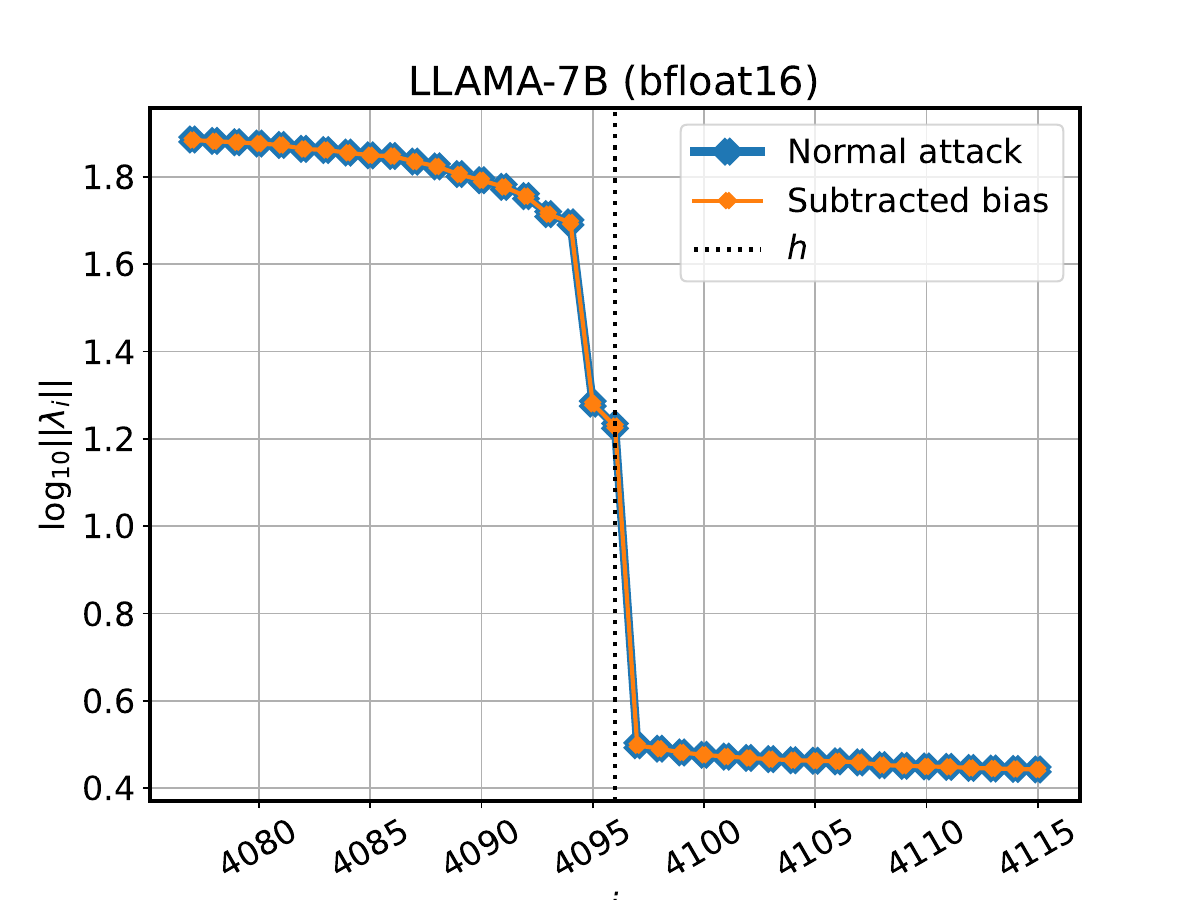}
    \footnotesize{(a) LLAMA-7B (\textbf{RMSNorm}).}
    \label{subfig:llama_abs}
  \end{minipage}
  \hfill
  \begin{minipage}{0.3\textwidth}
    \centering
    \includegraphics[width=1.15\linewidth]{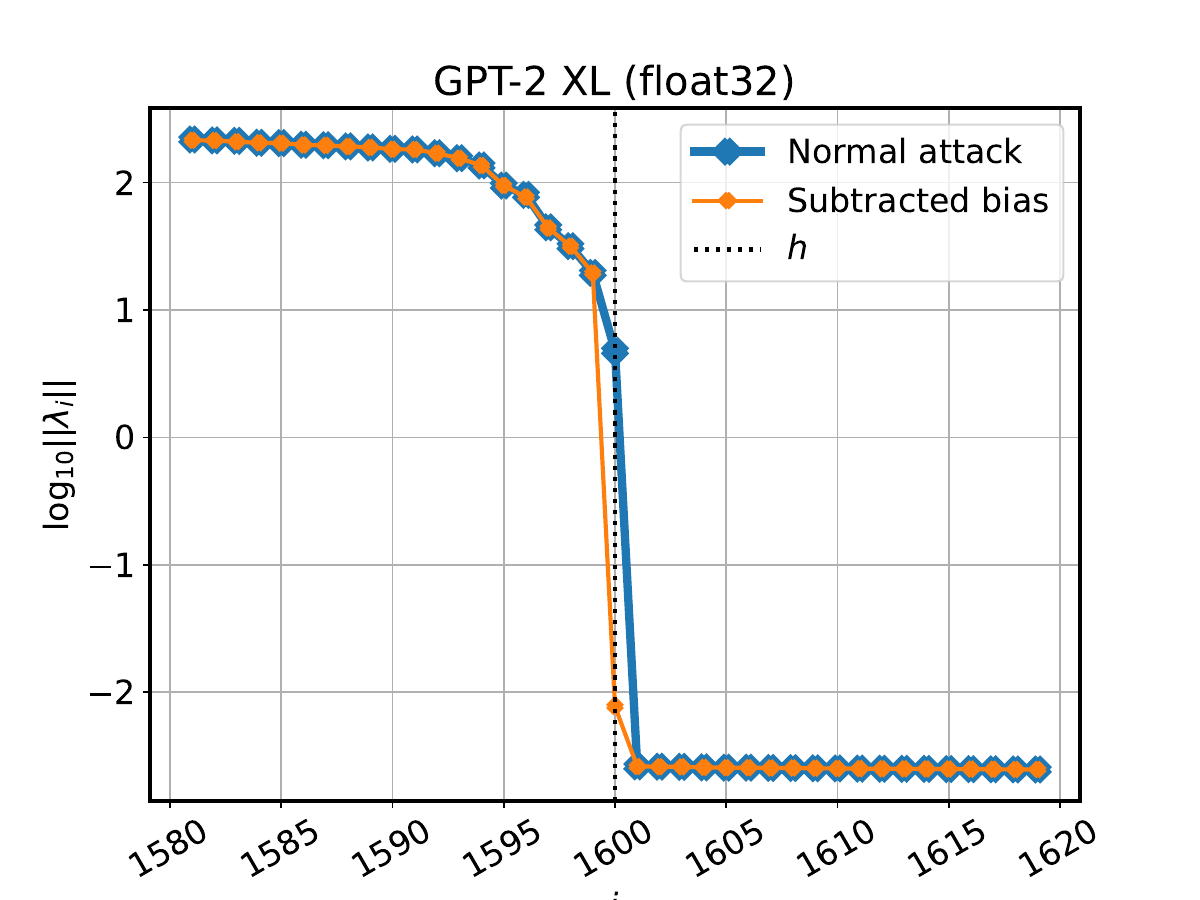}
    \footnotesize{(b) GPT-2 XL (\textbf{LayerNorm}).} 
    \label{subfig:gpt2_abs}
  \end{minipage}
  \hfill
  \begin{minipage}{0.3\textwidth}
    \centering
    \vspace{-.1cm}
    \includegraphics[width=1.15\linewidth]{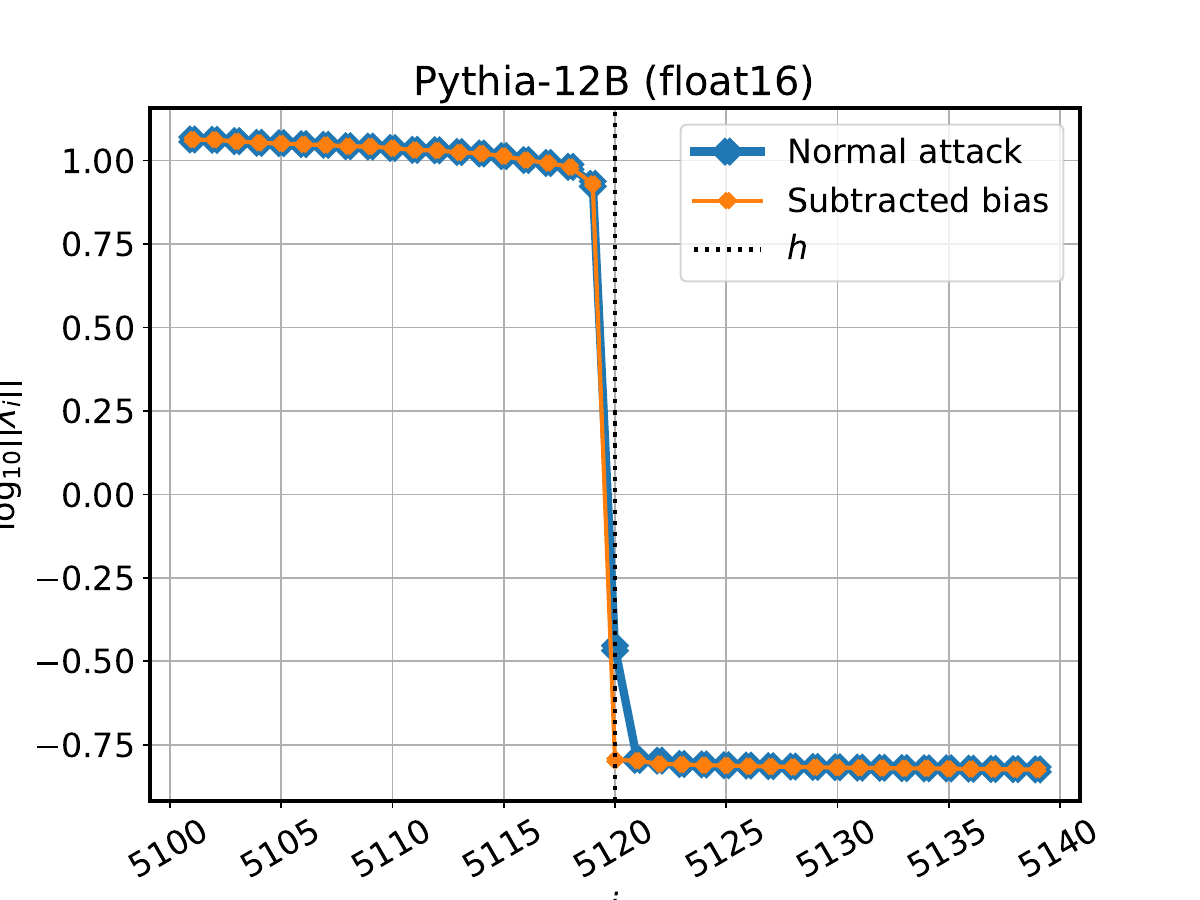}
    \footnotesize{(c) Pythia-12B (\textbf{LayerNorm}).}
    \label{subfig:pythia_abs}
  \end{minipage}
  \caption{Detecting whether models use \textbf{LayerNorm} or \textbf{RMSNorm} by singular value magnitudes.}
  \label{fig:stealing-norm}
\end{figure}

Is this attack practical for real models? We perform the same attack on the logprobs we obtained for \texttt{ada} and \texttt{babbage}.\footnote{Unfortunately, we deleted the logprobs for GPT-3.5 models before we created this attack due to security constraints.} We see in \Cref{fig:scaling-stealing-norm}a-b that indeed the drop in the $h$th singular values occurs for these two models that use LayerNorm (GPT-3's architecture was almost entirely inherited from GPT-2):

\begin{figure}[h!]
  \begin{minipage}{0.3\textwidth}
    \centering
    \vspace{-.4cm}
    \includegraphics[width=1.15\linewidth]{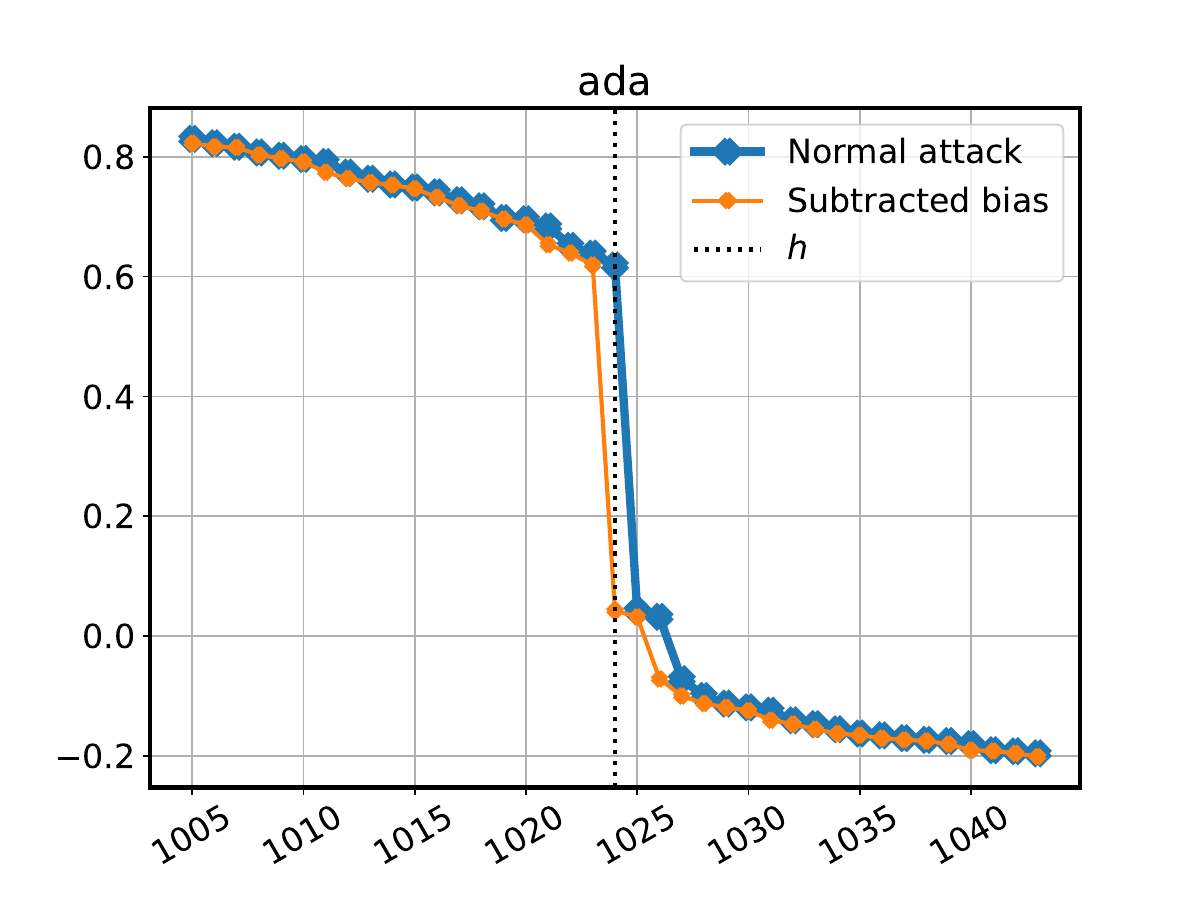}
    \footnotesize{(a) \texttt{ada} uses \textbf{LayerNorm}.}
  \end{minipage}
  \hfill
  \begin{minipage}{0.3\textwidth}
    \centering
    \includegraphics[width=1.15\linewidth]{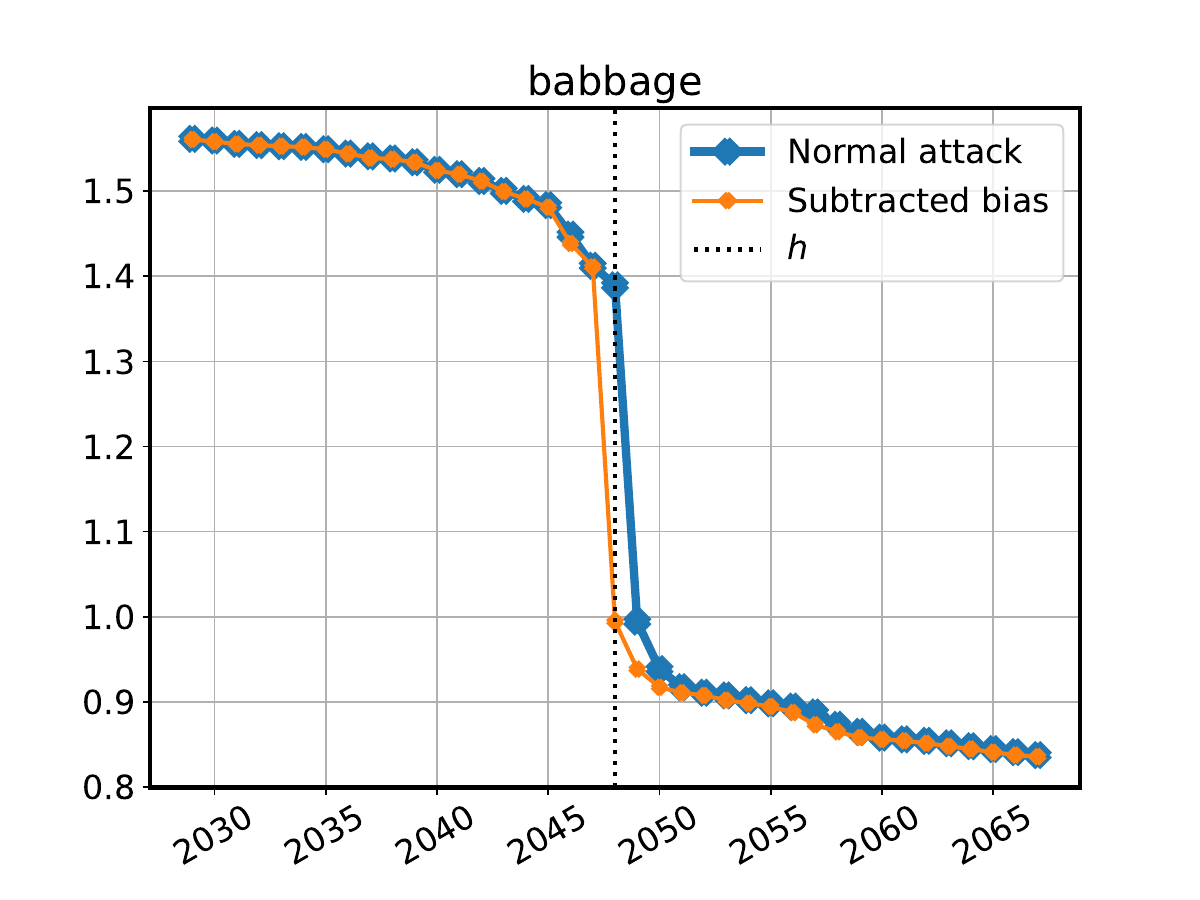}
    \footnotesize{(b) \texttt{babbage} uses \textbf{LayerNorm}.}
  \end{minipage}
  \hfill
  \begin{minipage}{0.3\textwidth}
    \centering
    \vspace{-.1cm}
    \includegraphics[width=1.15\linewidth]{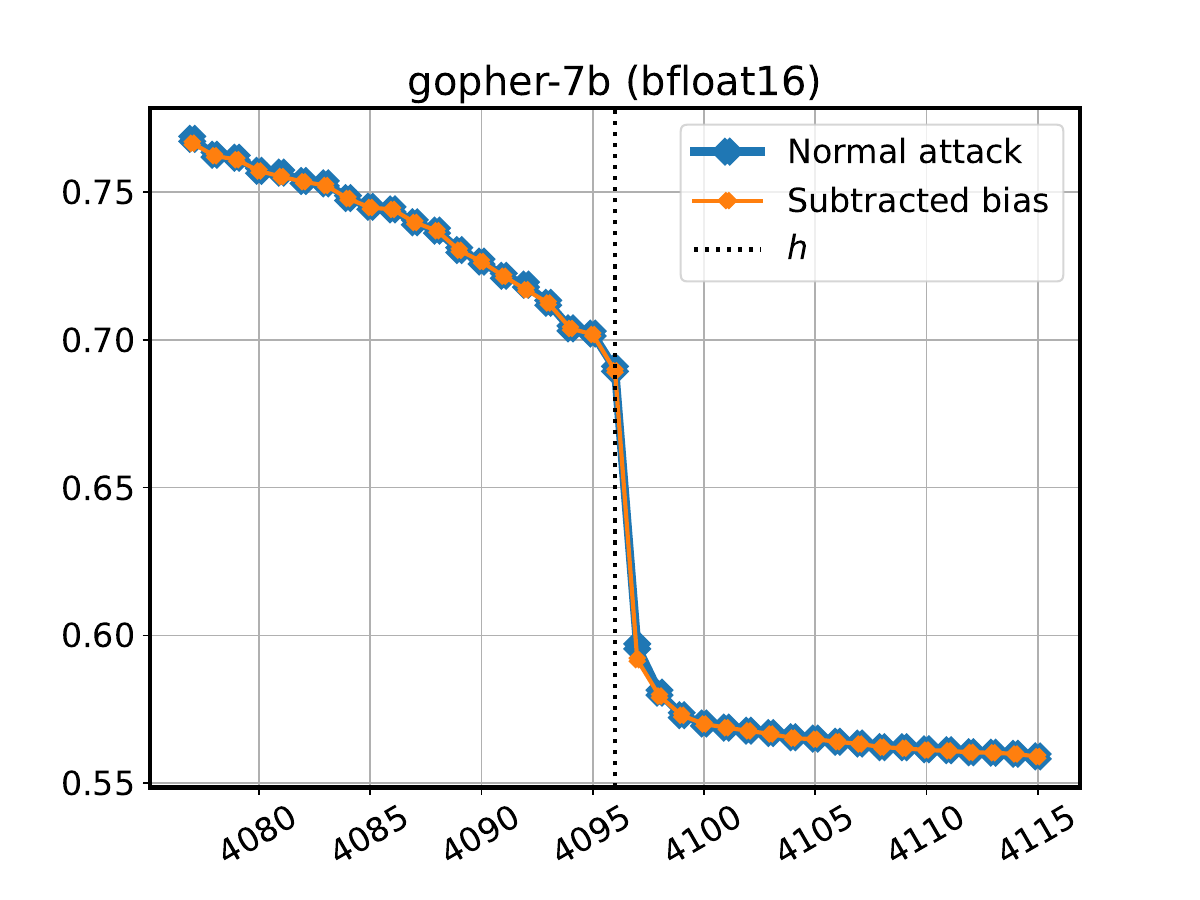}
    \footnotesize{(c) Gopher-7B uses \textbf{RMSNorm}.}
    \label{subfig:gopher_ln}
  \end{minipage}
  \caption{Stress-testing the LayerNorm extraction attack on models behind an API (a-b), and models using both RMSNorm and biases (c).}
  \label{fig:scaling-stealing-norm}
\end{figure}

As a final stress test, we found that all open language models that use RMSNorm do not use any bias terms \citep{stellabidermangooglesheet}. Therefore, we checked that our attack would not give a false positive when applied to a model with RMSNorm but with biases. We chose Gopher-7B \citep{gopher}, a model with public architectural details but no public weight access, that uses RMSNorm but also biases (e.g. on the output logits). In \Cref{fig:scaling-stealing-norm}c we show that indeed the $h$th singular value does not decrease for this model that uses RMSNorm.

%
%

\section{Proof of Lemma~\ref{lemma:recover-E-up-to-rotation}}
\label{sec:proof_of_42}

Restating the lemma from \Cref{sec:attack_detail}:

\custompar{\Cref{lemma:recover-E-up-to-rotation}} \textit{In the logit-API threat model, under the assumptions of Lemma ~\ref{lem:h_recovery}: \textnormal{(i)}
The method from \Cref{sec:attack_detail} recovers $\tEt = \Et \cdot \mathbf{G}$ for some $\mathbf{G} \in \R^{h \times h}$; \textnormal{(ii)}
With the additional assumption that $g_{\theta}(p)$ is a transformer with residual connections, it is impossible to extract $\Et$ exactly.}

We first give a short proof of (i): 

\begin{proof}
(i) To show we can recover $\tEt = \Et \cdot \mathbf{G}$, recall Lemma~\ref{lem:h_recovery}: we have access to $\Q^\top = \Et \cdot \H$ for some $\H \in \R^{h \times n}$.
Using the compact SVD of $\Q$ from the method in \Cref{sec:attack_detail}, $\Et \cdot \H \cdot \V = \U \cdot \mathbf{\Sigma}$.
We know $\mathbf{G} := \H \cdot \V \in \R^{h \times h}$,
hence if we take $\tEt = \U \cdot \mathbf{\Sigma}$, we have $\tEt = \Et \cdot \mathbf{G}$.
\end{proof}

Proving \Cref{lemma:recover-E-up-to-rotation}(ii) requires several steps due to the complexity of the transformer architecture: we progressively strengthen the proof to apply to models with no residual connections (\ref{subapp:e_with_fc}), models with residual connections (\ref{subapp:e_up_to_rotation_no_ln}), models with RMSNorm (\ref{subapp:specialize_orthog}), LayerNorm (\ref{subapp:extend_to_layer_norm}) and normalization with an $\varepsilon$ term (\ref{subapp:extend_to_eps_non_zer}).

\subsection{Proof of Lemma~\ref{lemma:recover-E-up-to-rotation}(ii) in Models With Fully-connected Layers}
\label{subapp:e_with_fc}

\begin{proof}[Proof of \Cref{lemma:recover-E-up-to-rotation}(ii)] As a gentle warmup, we prove (ii) under the additional assumption that the model does not use normalization layers (LayerNorm or RMSNorm) in its architecture. 
To prove (ii) we show it is possible to find a two distinct sets of model parameters $\theta, \theta'$ with different embedding projection matrices that result in identical API outputs.

We begin with a simpler case where $g_\theta$ does not have residual connections but a fully connected (FC) final layer. In this case, for any invertible $h \times h$ matrix $\bS$, we have that $g_\theta\br{p} = \bS g_{\theta^\prime}\br{p}$ where $\theta^\prime$ is the same as $\theta$ except that the weights of the final FC layer are pre-multiplied by ${\bS}^{-1}$. Hence, if $g_\theta$ has a final FC layer, it is impossible to distinguish between the embedding projection layer $\Et$ acting on $g_\theta$ and the embedding projection layer $\Et \cdot \bS$ acting on $g_{\theta'}$, given access to the output of the API $\api$ only.

\subsection{Proof of Lemma~\ref{lemma:recover-E-up-to-rotation}(ii) With Residual Layers}
\label{subapp:e_up_to_rotation_no_ln}

More generally, if $g_\theta$ is composed of residual layers but no normalization layers then $g_{\theta}(p) = \sum_i L_i (p)$, where $L_i(p)$ is the output of the $i$th residual layer in the model, ignoring the skip connection \citep{elhage2021mathematical, resnet_shallow_path}. Assume also that each $L_i$ has a final layer that is a fully connected linear layer and a linear input layer (this assumption is true for both attention and MLP modules in transformers without normalization layers). Constructing $\theta'$ such that each $L_i$ has input weights pre-multiplied by $\bS^{-1}$ and output FC weights multiplied by $\bS$, we have $g_{\theta'}(p) = \sum_i \bS L_i(p) = \bS \cdot g_{\theta}(p)$ by linearity. Finally, by using a new embedding projection matrix $(\bS^{-1})^\top \cdot \E$ and calculating
\begin{equation}\label{eqn:4_2_lemma_equation}
((\bS^{-1})^\top \cdot \E)^\top \cdot g_{\theta'}(p) = \Et \cdot g_{\theta}(p),
\end{equation}
we have shown that logit outputs are identical and so again we cannot distinguish these transformers by querying $\api$ and $\api'$ alone.



\subsection{Normalization Layers and Orthogonal Matrices}
\label{app:orthogonal_matrices}

In Sections \ref{app:orthogonal_matrices}-\ref{subapp:extend_to_eps_non_zer} we can no longer use general invertible matrices $\bS$ in our arguments, and must instead use orthogonal matrices, matrices $\U$ such that $\U^\top \U = I$. In models with LayerNorm, we specialise further, too (\Cref{subapp:extend_to_layer_norm}).



\begin{lemma}
The RMSNorm operation is equal to $x \mapsto \W n(x) + b$ where $\W$ is a diagonal matrix. 
\end{lemma}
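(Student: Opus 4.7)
The plan is to unpack the definition of RMSNorm directly and observe that the only nonlinear step is a pure normalization, while the learnable scaling acts coordinate-wise. Concretely, I would first write the standard RMSNorm operation from Zhang and Sennrich as $\text{RMSNorm}(x)_i = g_i \cdot x_i / \text{RMS}(x)$, where $\text{RMS}(x) = \sqrt{\tfrac{1}{d}\sum_j x_j^2}$ is a scalar function of $x$, and optionally add the bias $b_i$ if the implementation includes one.

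Second, I would define $n(x) := x / \text{RMS}(x)$, isolating the normalization step as a single vector-valued map $n \colon \mathbb{R}^d \setminus \{0\} \to \mathbb{R}^d$. This absorbs the nonlinearity and reduces what remains to linear-algebraic bookkeeping.

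Third, I would observe that coordinate-wise multiplication by the gain vector $g = (g_1, \dots, g_d)$ is, by definition, the same as multiplication by the diagonal matrix $\W := \mathrm{diag}(g)$, so $(\W n(x))_i = g_i \cdot n(x)_i$. Combining this with the (possibly zero) bias gives $\text{RMSNorm}(x) = \W n(x) + b$, which is precisely the claimed form, and $\W$ is manifestly diagonal.

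There is essentially no obstacle here: the lemma is a bookkeeping statement meant to package RMSNorm into a shape suitable for the orthogonal-matrix arguments in Appendices \ref{subapp:specialize_orthog}--\ref{subapp:extend_to_eps_non_zer}. The only subtlety worth flagging is that $\W$ being diagonal (and not a general invertible matrix) is exactly what forces the subsequent symmetry group to shrink from $GL_h$ to the orthogonal group, so the proof should make explicit that $n$ carries all of the nonlinearity and that the affine post-composition $\W(\cdot) + b$ has the restricted diagonal structure.
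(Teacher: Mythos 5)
Your proof is correct and takes essentially the same route as the paper: unpack the definition, observe that elementwise scaling by the gain vector is multiplication by a diagonal matrix, and let $n$ carry the nonlinearity. The only (cosmetic) difference is that the paper fixes $n(x) = x/\lVert x\rVert$ and absorbs the resulting factor of $\sqrt{h}$ into the diagonal matrix $\W$, whereas you fold that factor into $n$ by normalizing by the RMS instead of the Euclidean norm; either convention works for the later orthogonal-commutation arguments, since both depend only on $\lVert x\rVert$.
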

\begin{proof}
RMSNorm is conventionally written as 
\begin{equation}
    x \mapsto \frac{w \cdot x}{\sqrt{\frac{1}{h} \sum_i x_i^2}} + b
\label{eqn:rms_norm}
\end{equation}
where $w$ is multiplied elementwise by normalized $x$. Clearly this can be written as a diagonal matrix. Further, we can multiply this diagonal matrix by $\sqrt{h}$ to cancel that factor in the denominator of \Cref{eqn:rms_norm}. Since $n(x) = x / ||x|| = x \bigg/ \sum_i \sqrt{x_i^2}$ we get the result.
\end{proof}

Intuitively, the proof in \Cref{subapp:e_up_to_rotation_no_ln} relied on pre-multiplying the input projection weight of layers by a matrix $\bS^{-1}$, so that this cancelled the rotation $\bS$ applied to the model's hidden state (called the `residual stream' in mechanistic interpretability literature \citep{elhage2021mathematical}). Formally, if we let the input projection layer be $\M$, we were using the fact that $\left( \M \bS^{-1} \right) \left(\bS x \right) = \M x$. However, since models with normalization layers use these before the linear input projection, the result of applying $\bS$ to the hidden state, if we apply the same procedure, produces the activation

\begin{equation}
    (\M \bS^{-1} ) (\W n(\bS x) + b)
    \label{eqn:norm_s}
\end{equation}
but since in general $n$ and $\bS$ do not commute, we cannot conclude that the $\bS$ transformations preserve the transformer's outputs. We will show that if we take $\bS = \U$ an orthogonal matrix, then we still get a general impossibility result.

To do this, we will need a simple result from linear algebra:

\begin{lemma}\label{lemma:linear-algebra-orthogonal}
    Let $x \in \mathbb{R}^h$. Then the normalization map $n(x) := \frac{x}{||x||}$ commutes with orthogonal matrices $\U$.
\end{lemma}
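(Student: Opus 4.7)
The plan is to unfold the definition of $n$ on both sides of the equality $n(\mathbf{U}x) = \mathbf{U} n(x)$ and reduce the claim to the norm-preservation property of orthogonal matrices. First I would write
\[
n(\mathbf{U}x) \;=\; \frac{\mathbf{U}x}{\|\mathbf{U}x\|},
\]
so the task reduces to showing that $\|\mathbf{U}x\| = \|x\|$, after which linearity lets me pull $\mathbf{U}$ outside the scalar division.

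The key step is the norm-preservation identity. Using $\mathbf{U}^\top \mathbf{U} = \mathbf{I}$ (the defining property of an orthogonal matrix quoted in the excerpt), I compute
\[
\|\mathbf{U}x\|^2 \;=\; (\mathbf{U}x)^\top (\mathbf{U}x) \;=\; x^\top \mathbf{U}^\top \mathbf{U} x \;=\; x^\top x \;=\; \|x\|^2,
\]
and take square roots to conclude $\|\mathbf{U}x\| = \|x\|$. Substituting this back gives
\[
n(\mathbf{U}x) \;=\; \frac{\mathbf{U}x}{\|x\|} \;=\; \mathbf{U}\!\left(\frac{x}{\|x\|}\right) \;=\; \mathbf{U}\, n(x),
\]
which is exactly the commutation claim.

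There is essentially no obstacle; the only minor caveat worth a one-line remark is that $n$ is only defined when $x \neq 0$, and since orthogonal $\mathbf{U}$ is invertible we have $\mathbf{U}x \neq 0$ iff $x \neq 0$, so both sides of the equality are simultaneously well-defined. No additional structure of $\mathbf{U}$ beyond $\mathbf{U}^\top \mathbf{U} = \mathbf{I}$ is used, which is why the argument fails for general invertible $\bS$ (as flagged in the preceding discussion around equation \eqref{eqn:norm_s}) and thus motivates the restriction to orthogonal matrices in the subsequent LayerNorm/RMSNorm arguments.
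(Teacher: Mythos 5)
Your proof is correct and follows exactly the paper's argument: reduce the claim to the norm-preservation identity $\|\U x\| = \|x\|$, which follows from $x^\top \U^\top \U x = x^\top x$. The added remark about well-definedness at $x=0$ is a harmless refinement the paper omits.
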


\begin{proof}[Proof of Lemma~\ref{lemma:linear-algebra-orthogonal}]
We need to show that $\frac{\U x}{||x||} = \frac{\U x}{||\U x||}$. This is true since $x^\top \U^\top \U x =  x^T x$, so $||\U x|| = ||x||$.
\end{proof}


\subsection{Proof of Lemma~\ref{lemma:recover-E-up-to-rotation}(ii) in Models With RMSNorm}
\label{subapp:specialize_orthog}

In \Cref{lemma:linear-algebra-orthogonal}, we showed that orthogonal matrices $\U$ commute with normalization. Hence if we multiply all layer output weights by $\U$, but pre-multiply all layer input projection weights by $\W \U^\top \W^{-1}$, then the effect of the linear projection layer is 

\begin{equation}
    (\M \W \U^\top \W^{-1}) (\W n(\U x) + b) = (\M \W \U^\top \W^{-1}) (\W\U n(x) + b) = \M ( \W n(x) + b )
\end{equation}

which is identical to the original model. Applying this procedure to all layers added to the hidden state (using the different $W$ diagonal matrices each time) gives us a model $g_{\theta'}(p)$ such that $g_{\theta'}(p) = \U g_{\theta'}(p)$ so a different embedding projection matrix $\Et \U^\top$ will give identical outputs to the original model $g_{\theta}(p)$ (with embedding projection $\Et$).

Note that we ignore what happens to $b$ in the above arguments, since any sequence of affine maps applied to a constant $b \in \R^h$ yields a constant $b' \in \R^h$, and we can just use $b'$ instead of $b$ in $g_{\theta'}$.

\subsection{Proof of Lemma~\ref{lemma:recover-E-up-to-rotation}(ii) in Models With LayerNorm}
\label{subapp:extend_to_layer_norm}

The LayerNorm operation is the composition of a centering operation $x \mapsto x - \bar{x}$ with RMSNorm (i.e. first centering is applied, then RMSNorm). Therefore the identical argument to \Cref{subapp:specialize_orthog} goes through, besides the fact that we need $\U$ to also commute with the centering operation. 
Since the centering operation fixes a $(h-1)$ dimensional subspace defined by $\mathbf{1}^T x = 0$ where $\mathbf{1} \in \R^h$ is the vector of ones,
it is enough to impose an additional condition that $\U \, \mathbf{1} \in \{-\mathbf{1}, \mathbf{1}\}$.

\subsection{Proof of Lemma~\ref{lemma:recover-E-up-to-rotation}(ii) in Models With Normalization $\varepsilon \neq 0$}
\label{subapp:extend_to_eps_non_zer}

We now extend to realistic models where the $\varepsilon$ in the denominator of LayerNorm is not 0. We can do this because the only fact we used about $x \mapsto n(x)$ was that $x \mapsto n(\U x)$ was identical to $x \mapsto \U n(x)$. In turn \Cref{lemma:linear-algebra-orthogonal} relied on $||\U x|| = ||x||$ due to orthogonality. But adjusting $n(x)$ to $n'(x) := x \bigg/ \sqrt{\frac1h ||x||^2 + \varepsilon}$ (i.e. normalization with an epsilon), since $||x|| = ||\U x||$, $n'$ commutes with $\U$, and so the proofs in \Cref{subapp:specialize_orthog} and \Cref{subapp:extend_to_layer_norm} still work when using $n'$ instead of $n$.

    %


Therefore finally, we have proven the impossibility result \Cref{lemma:recover-E-up-to-rotation}(ii) in all common model architectures (all non-residual networks that end with dense layers, and all transformers from \citet{stellabidermangooglesheet}).
\end{proof}

\section{Derivation of Binarized Logprob Extraction (Section~\ref{sec:binarized})}
\label{sec:binarized:proof}
To begin, observe that we can write
\begin{align*}
   y_{\text{top}} & = \logit_{\text{top}} - \log \sum_{i} \exp(\logit_i) \\
   y'_{\text{top}} & = \logit_{\text{top}} - \log \big( \exp(\logit_t-1) + \sum_{i \ne t} \exp(\logit_i) \big) 
\end{align*}

Let $\N = \sum_i \exp\br{\logit_i}$ and $p=\exp\br{\logit_t} / {\N}$.
Then, we can rewrite
\begin{align*}
    y_{\text{top}}  &= \logit_{\text{top}} - \log \N \\
    y_{\text{top}} &= \logit_{\text{top}} - \log(\N + (1/e-1) p \N)
\end{align*}
Subtracting the two, we get
\begin{align*}
    y_{\text{top}} - y'_{\text{top}} &= \log\br{1 + (1/e - 1) p} \\
\implies 
    p &= \frac{\exp(y_{\text{top}} - y'_{\text{top}}) - 1}{1/e - 1}.
\end{align*}

\paragraph{Related work.}
Concurrent work \cite{morris2023language} discusses a similar but weaker two-query logprob extraction.
Their attack requires a logit bias larger than $\logit_{\text{top}} - \logit_i$ and top-2 logprob access;
our attack works as soon the logit bias is allowed to be nonzero, and with top-1 logprob access.

\newcommand{\logits}{z}

\section{Efficient Recovery of Logits From Top $k$ Logprobs APIs} \label{sec:linear_reconstruction}

%


In Section \ref{sec:topk-extraction-main} of the main body, we presented a simple and practical method for extracting the entire logits vector via multiple queries to an API that only provides the top few logprobs and accepts a logit bias with each query.
In this section we present more efficient methods.

The method we presented earlier uses a reference token. We set this to some arbitrary value (e.g., $0$) and then compare the logits for all other tokens to this one. This approach is numerically stable, but is slightly wasteful: of the top $K$ logprobs returned by the API, one is always the reference token. Hence, we only recover $K-1$ logits per query with this method.

In this appendix, we present linear algebraic methods that are able to recover $K$ logits per query to the top-$K$ logprobs API.

\textbf{Setting:}
Recall that there is an unknown vector $\logits = \Et \cdot g_\theta(p) \in \mathbb{R}^\ell$ (i.e., the logits for a given prompt $p$) that we want to recover.
We can make multiple queries to the API with the same prompt $\api(p,b)$.
Each query is specified by a vector $\logitbias \in \mathbb{R}^\ell$ (a.k.a.~the logit bias).
We receive answers of the form $(i, a_i(\logits,\logitbias)) \in \mathbb{N} \times \mathbb{R}$, where $i$ is a token index and $a_i(\logits,\logitbias)$ is a logprob:
\begin{equation}
    a_i(\logits,\logitbias) = \log\left(\frac{\exp(\logits_i+\logitbias_i)}{\sum_j^\ell \exp(\logits_j+\logitbias_j)} \right) = \logits_i + \logitbias_i - \log\left(\sum_j^\ell \exp(\logits_j+\logitbias_j)\right).
    \label{eq:answertoquery}
\end{equation}
Each query may receive multiple answers (namely, the $K$ largest $a_i(\logits,\logitbias)$ values).
For notational simplicity, we denote multiple answers to one query the same way as multiple queries each returning one answer.
Suppose queries $\logitbias^1, \cdots, \logitbias^m$ were asked and we received $m$ answers $(i_1,a_{i_1}(\logits,\logitbias^1)) \gets \api(p,\logitbias^1), \cdots, (i_m,a_{i_m}(\logits,\logitbias^m)) \gets \api(p,\logitbias^m)$.

Our goal is to compute $\logits$ from the answers $a_i(\logits,\logitbias)$.

\subsection{Warmup: Single Logprob API ($K=1$)}
\label{sec:single-logprob}
As a starting point, suppose the API only returns the single largest logprob (i.e., $K=1$).
The approach from Section \ref{sec:topk-extraction-main} cannot work in this setting because we cannot obtain the logprob of both the reference token and another token at the same time, meaning we can recover less than $1$ logit per query.

The high-level idea to overcome this problem is that, instead of normalizing logits relative to a reference token, we shall normalize the logits to be logprobs. That is, we recover the logits with the normalization $\sum_j \exp(\logits_j) = 1$.
With this normalization it is no longer necessary to include a reference token in every query. 

Fix a token index $i$ and let $b_i=B$ and $b_j=0$ for all $j \ne i$. We query the API with this logit bias and assume that $B$ is large enough that token $i$ is returned: \[(i,a_i(\logits,\logitbias)) \gets \api(p,b).\]

From Equation \ref{eq:answertoquery},
\begin{align*}
    a_i(\logits,\logitbias) &= \logits_i + b_i - \log\left(\sum_j^\ell \exp(\logits_j+\logitbias_j)\right) \\
    &= \logits_i + B - \log\left(\exp(\logits_i+B) +\sum_{j \ne i} \exp(\logits_j)\right)\\
    &= \logits_i + B - \log\left(\exp(\logits_i+B)-\exp(\logits_i) + \sum_j^\ell \exp(\logits_j)\right),\\ \implies
    \logits_i + B - a_i(\logits,\logitbias) &= \log\left(\exp(\logits_i+B)-\exp(\logits_i) + \sum_j^\ell \exp(\logits_j)\right),\\ \implies
    \exp(\logits_i + B - a_i(\logits,\logitbias)) &= \exp(\logits_i+B)-\exp(\logits_i) + \sum_j^\ell \exp(\logits_j),\\ \implies
    \exp(\logits_i + B - a_i(\logits,\logitbias)) - \exp(\logits_i+B)+ \exp(\logits_i) &= \sum_j^\ell \exp(\logits_j),\\ \implies
    \exp(\logits_i) \cdot \left( \exp(B - a_i(\logits,\logitbias)) - \exp(B) + 1 \right) &= \sum_j^\ell \exp(\logits_j),\\ \implies
    \exp(\logits_i) &= \frac{\sum_j^\ell \exp(\logits_j)}{\exp(B - a_i(\logits,\logitbias)) - \exp(B) + 1},\\ \implies
    \logits_i &= \log\left(\sum_j^\ell \exp(\logits_j)\right) - \log\left(\exp(B - a_i(\logits,\logitbias)) - \exp(B) + 1\right).
\end{align*}
Thus if we normalize $\sum_j^\ell \exp(\logits_j)=1$, we have
\begin{equation}
    \logits_i = - \log\left(\exp(B - a_i(\logits,\logitbias)) - \exp(B) + 1\right). \label{eq:K1recovery}
\end{equation}

\subsection{Recovering $K$ Logits From $K$ Logprobs}

The approach from the previous subsection extends to the setting where each API query returns the top $K$ logprobs. In practice we work with $K=5$. We are able to recover $K$ logits. Again, instead of using a reference token to normalize the logits, we will normalize $\sum_j \exp(\logits_j) = 1$. However, in this setting we will need to solve a $K$-by-$K$ system of linear equations.

Fix $K$ token indices $i_1,\cdots,i_K$ and let $b_{i_k}=B$ for $k\in\{1,\cdots,K\}$ and $b_j=0$ for all $j \notin \{i_1,\cdots,i_K\}$. We query the API with this logit bias and assume that $B$ is large enough that the logprobs for $i_1,\cdots,i_K$ are returned as the top $K$ logprobs: \[(i_1,a_{i_1}(\logits,\logitbias)), (i_2,a_{i_2}(\logits,\logitbias)), \cdots ,(i_K,a_{i_K}(\logits,\logitbias))  \gets \api(p,b).\]
Let $\logits \in \mathbb{R}^\ell$ be the (unknown) logits and let $\N = \sum_{i} \exp(\logits_i)$ be the normalizing constant.
For each $k \in \{1,\cdots,K\}$, we have
\begin{align*}
    a_{i_k}(\logits,\logitbias) &= \logits_{i_k} + B - \log\left(\sum_{i \in \{i_1,\cdots,i_K\}} \exp(\logits_i + B) + \sum_{i \notin \{i_1,\cdots,i_K\}} \exp(\logits_i)\right) \\
    &=  \logits_{i_k} + B - \log\left((e^B-1)\sum_{i \in \{i_1,\cdots,i_K\}} \exp(\logits_i) + \sum_{i}^\ell \exp(\logits_i)\right) \\
    &=  \logits_{i_k} + B - \log\left((e^B-1)\sum_{i \in \{i_1,\cdots,i_K\}} \exp(\logits_i) + \N \right) , \\ \implies
      \logits_{i_k} + B - a_{i_k}(\logits,\logitbias) &= \log\left((e^B-1)\sum_{i \in \{i_1,\cdots,i_K\}} \exp(\logits_i) + \N \right) , \\ \implies
      \exp(\logits_{i_k} + B - a_{i_k}(\logits,\logitbias)) &= (e^B-1)\sum_{i \in \{i_1,\cdots,i_K\}} \exp(\logits_i) + \N , \\
\end{align*}
And therefore we can conclude
\[
      \exp(B - a_{i_k}(\logits,\logitbias)) \cdot \exp(\logits_k) - (e^B-1)\sum_{i \in \{i_1,\cdots,i_K\}} \exp(\logits_i) = \N .
      \]
This linear system of equations can be expressed in matrix form:
\begin{align*}
  A \cdot \left(\begin{array}{c} \exp(\logits_{i_1}) \\ \exp(\logits_{i_2}) \\ \vdots \\ \exp(\logits_{i_K}) \end{array}\right) = \left(\begin{array}{c} \N \\ \N \\ \vdots \\ \N \end{array}\right) , \\
\end{align*}
where $A$ is a $K \times K$ matrix with entries
\begin{align*}
  A_{k,j} = \begin{cases}
    \exp(B-a_{i_k}(\logits,\logitbias))-(e^B - 1) & \text{if $j = k$} \\
    -(e^B - 1) & \text{if $j \neq k$}.
  \end{cases}
\end{align*}
Note that $A$ is a rank-one perturbation of a diagonal matrix, that is, if $\one$ is the all-ones vector, then
\begin{align*}
  A =  \diag_{1 \le k \le K}(\exp(B-a_{i_k}(\logits,\logitbias))) - (e^B - 1)\one \one^T,
\end{align*}
where $\diag_{1 \le k \le K}(\exp(B-a_{i_k}(\logits,\logitbias)))$ denotes a diagonal matrix with the $k$-th diagonal entry being $\exp(B-a_{i_k}(\logits,\logitbias))$.
Inverting a diagonal matrix is easy and thus we can use the Sherman-Morrison formula to compute the inverse of $A$:
\begin{align*}
  A^{-1} 
  &= \diag_{1 \le k \le K}(\exp(a_{i_k}(\logits,\logitbias)-B))) \\ &+ (e^B-1)
  \frac{\diag_{1 \le k \le K}(\exp(a_{i_k}(\logits,\logitbias)-B))) \one \one^T \diag_{1 \le k \le 5}(\exp(a_{i_k}(\logitbias)-B)))}
       {1 - (e^B - 1) \one^T \diag_{1 \le k \le 5}(\exp(a_{i_k}(\logitbias)-B))) \one} \\
  &= \diag(v) + (e^B-1) \frac{ v v^T}{1 - (e^B - 1) \one^ T  v} ,
\end{align*}
where $v \in \mathbb{R}^K$ is the vector with entries $v_k = \exp(a_{i_k}(\logits,\logitbias) - B)$.
Hence
\begin{align*}
    \left(\begin{array}{c} \exp(\logits_{i_1}) \\ \exp(\logits_{i_2}) \\ \vdots \\ \exp(\logits_{i_K}) \end{array}\right) &=
  A^{-1} \cdot \left(\begin{array}{c} \N \\ \N \\ \vdots \\ \N \end{array}\right) \\ 
  &= \left( \diag(v) + (e^B-1) \frac{ v v^T}{1 - (e^B - 1) \one^ T  v} \right) \cdot \one \cdot \N \\
  &= \left( v + \frac{(e^B - 1) v v^T \one }{1 - (e^B - 1) \one^T  v} \right) \cdot \N \\
  &= \left( 1 + \frac{(e^B - 1) \one^T v }{1 - (e^B - 1) \one^T  v} \right) \cdot \N \cdot v\\
  &= \frac{\N}{1 - (e^B - 1) \sum_j v_j} \cdot v,\\
\implies
  \logits_{i_k} &= \log \left(A^{-1} \one \N\right)_k \\
  &= \log\left(\frac{\N v_k}{1 - (e^B - 1) \sum_j^K v_j} \right)\\
  &= \log\left(\frac{\N \exp(a_{i_k}(\logits,\logitbias) - B)}{1 - (e^B - 1) \sum_j^K \exp(a_{i_j}(\logits,\logitbias) - B)} \right)\\
  &= \log \N + a_{i_k}(\logits,\logitbias) - B - \log\left(1 - (e^B - 1) \sum_j^K \exp(a_{i_j}(\logits,\logitbias) - B)\right) \\
  &= \log \N + a_{i_k}(\logits,\logitbias) - B - \log\left(1 - (1 - e^{-B}) \sum_j^K \exp(a_{i_j}(\logits,\logitbias))\right) .
\end{align*}
If we normalize $\N=1$, this gives us a formula for computing the logits: 
\begin{equation}
    \logits_{i_k} = a_{i_k}(\logits,\logitbias) - B - \log\left(1 - (1 - e^{-B}) \sum_j^K \exp(a_{i_j}(\logits,\logitbias))\right) . \label{eq:Krecovery}
\end{equation} 
Note that setting $K=1$ yields the same result as in Equation \ref{eq:K1recovery}.

Recovery using Equation \ref{eq:Krecovery} is more efficient than the method in Section \ref{sec:topk-extraction-main}, as we recover $K$ logits $z_{i_1}, z_{i_2}, \cdots, z_{i_K}$ rather than just $K-1$ logits. However, if $B$ is large, numerical stability may be an issue. (And, if $B$ is small, the logit bias may be insufficient to force the API to output the desired tokens by placing them in the top $K$.)
Specifically, as $B \to \infty$, we have $(1-e^{-B}) \sum_j^K \exp(a_{i_j}(\logits,\logitbias)) \to 1$ and so the logarithm in Equation \ref{eq:Krecovery} tends to $\log(1-1)=-\infty$; this means we may have catastrophic cancellation.

\paragraph{Related work.}
Two works published during the responsible disclosure period use a similar procedure, and deal with numerical issues in different ways.
\cite{openlogprobs} 
start with a low $B$ for the whole vocabulary, then increase $B$ and ask for all tokens that haven't appeared before, and repeat until all tokens are covered.
\cite{hayase2024query} use the method in \cref{sec:single-logprob}, and set $B = -\hat{z}_i$, where $\hat{z}_i$ is an estimate of $z_i$ inherent to their application. It is possible variants of this method have been discussed before our or these works, but we are not aware of further references.

\subsection{General Method}

In general, we may not have have full control over which logprobs the API returns or which logit bias is provided to the API. Thus we generalize the linear algebraic approach above to reconstruct the logits from arbitrary logit biases and tokens. 

 Suppose queries $\logitbias^1, \cdots, \logitbias^m$ were asked and we received $m$ answers $(i_1,a_{i_1}(\logits,\logitbias^1)) \gets \api(p,\logitbias^1), \ldots, (i_m,a_{i_m}(\logits,\logitbias^m)) \gets \api(p,\logitbias^m)$. (If a query returns multiple answers, we can treat this the same as multiple queries each returning one answer.)

As before, rearranging Equation \ref{eq:answertoquery} gives the following equations.
\begin{align*}
   & \forall k\in[m] ~~~ \exp(a_{i_k}(\logits,\logitbias^k_{i_k})) = \frac{\exp(\logits_{i_k}+\logitbias_{i_k}^k)}{\sum_j^\ell \exp(\logits_j+\logitbias_j^k)}.\\
   & \forall k\in[m] ~~~ \sum_j^\ell \exp(\logits_j+\logitbias_j^k) = \exp(\logits_{i_k}+\logitbias_{i_k}^k-a_{i_k}(\logits,\logitbias^k)).\\
   & \forall k\in[m] ~~~ \sum_j^\ell \exp(\logits_j) \cdot \exp(\logitbias_j^k) = \exp(\logits_{i_k}) \cdot \exp(\logitbias_{i_k}^k-a_{i_k}(\logits,\logitbias^k)).\\
    & \forall k\in[m] ~~~ \sum_j^\ell \left( \exp(\logitbias_j^k) - {\mathbb{I}[j=i_k]}\cdot\exp(\logitbias_{i_k}^k-a_{i_k}(\logits,\logitbias^k)) \right) \cdot \exp(\logits_j) = 0. \\
    & A \cdot \left(\begin{array}{c} \exp(\logits_1) \\ \exp(\logits_2) \\ \vdots \\ \exp(\logits_\ell) \end{array}\right) = \left(\begin{array}{c} 0 \\ 0 \\ \vdots \\ 0 \end{array}\right) , \\
    & \text{where}~~~ \forall k\in[m] ~\forall j\in[\ell] ~~~ A_{k,j} = \exp(\logitbias_j^k) \cdot \left(1-\mathbb{I}[j=i_k]\cdot\exp(-a_{i_k}(\logits,\logitbias^k))\right).
\end{align*}
Here $\mathbb{I}[j=i_k]$ is $1$ if $j=i_k$ and $0$ otherwise.
If $A$ is invertible, then this linear system can be solved to recover the logits $\logits$. 
Unfortunately, $A$ is not invertible: Indeed, we know that the solution cannot be unique because shifting all the logits by the same amount yields the exact same answers $a_i(\logits,\logitbias)=a_i(\logits+\mathbf{1},\logitbias)$. That is, we expect a one-dimensional space of valid solutions to $A \cdot \exp(\logits) = \mathbf{0}$.
To deal with this we simply add the constraint that $\logits_1=0$ or, equivalently, $\exp(\logits_1)=1$.  This corresponds to the system 
\[
    \widehat{A} \cdot \exp(\logits) = \left(\begin{array}{c} 1~0~\cdots~0 \\ A \end{array}\right) \cdot \left(\begin{array}{c} \exp(\logits_1) \\ \exp(\logits_2) \\ \vdots \\ \exp(\logits_\ell) \end{array}\right) = \left(\begin{array}{c} 1 \\ 0 \\ \vdots \\ 0 \end{array}\right).
\]
(We could also normalize $\sum_i^\ell \exp(\logits_i) = 1$. This corresponds to the first row of $\widehat{A}$ being all $1$s instead of one $1$.)
This is solvable as long as the augmented matrix has a nonzero determinant
\begin{equation}
    \mathrm{det}\left(\widehat{A}\right)=\mathrm{det}\left(\begin{array}{c} 1~0~\cdots~0 \\ A \end{array}\right) = \mathrm{det}(A_{1:m,2:\ell}).
\end{equation}
Here $A_{1:m,2:d}$ denotes $A$ with the first column removed. Note that we are setting $m=\ell-1$. This is the minimum number of query-answer pairs that we need. If we have more (i.e., $m \ge \ell$), then the system is overdetermined. Having the system be overdetermined is a good thing; the extra answers can help us recover the logprobs with more precision. The least squares solution to the overdetermined system is given by
\begin{equation}
    \widehat{A}^T \widehat{A} \cdot \left(\begin{array}{c} \exp(\logits_1) \\ \exp(\logits_2) \\ \vdots \\ \exp(\logits_\ell) \end{array}\right) = \widehat{A}^T \left(\begin{array}{c} 1 \\ 0 \\ \vdots \\ 0 \end{array}\right). \label{eq:lstsq-recovery}
\end{equation}
This provides a general method for recovering the (normalized) logits from the logprobs API.

\paragraph{Related work.}
\cite{zanella2021grey} have an almost identical method, although they operate in the setting of a publicly known encoder and reconstructing the last layer.

\newpage
\section{Extraction From Logprob-free APIs}
\label{sec:logprobs_free_attacks}
\label{sec:logprob-free-appendix}

A more conservative API provider may remove access to the combination of logit bias and logprobs entirely. Indeed, after disclosing our attack to OpenAI, they removed the ability for logit bias to impact the top logprobs---thus preventing the attacks from the prior sections. 
To exploit situations such at this,
we further develop several logprob-free attacks that recover the complete logit vector by performing
binary search on the logit bias vector, albeit at increased cost.
\footnote{We release supplementary code that deals with testing these attacks without direct API queries at \url{https://github.com/dpaleka/stealing-part-lm-supplementary}.}

\paragraph{API:} Some APIs provide access to a logit bias term, 
but do not provide any information about the logprobs. Thus, we have, 
\[\api(\x, \logitbias) = \Argtopk\br{\logsoftmax\br{\Et \cdot g_\theta(\x) + \logitbias}}.\]
where $\Argtopk\br{\z}$ returns the index of the highest coordinate in the vector $\z \in \R^l$. In this section, we will use the notation $b=\{i: z\}$ to denote that the bias is set to $z$ for token $i$ and $0$ for every other token. We also use $b=\{\}$ to denote that no logit bias is used. Finally, we assume that the bias is restricted to fall within the range $[-B, B]$.

\paragraph{What can be extracted?} The attacks developed in this Section reconstruct the logit vector up to an additive ($\infty$-norm) error of $\varepsilon$.

\subsection{Warm-up: Basic Logprob-free Attack}
\label{sec:logprobfree:basic}
\paragraph{Method.}
We make one simple insight for our logprob-free attacks: sampling with temperature 0 produces the token with the largest logit value.
By adjusting the logit bias for each token accordingly, we can therefore recover every token's logit value through binary search.
Formally, let $p$ be the prompt, and relabel tokens so that the token with index $0$ is the most likely token in the response to $p$, 
given by $\api(p, \logitbias=\{\})$.
For each token $i \neq 0$, we run a binary search over the logit bias term to find the minimal value $x_i \ge 0$ such that the model emits token $i$ with probability 1.
This recovers all logits (like all prior attacks, we lose one free variable due to the softmax).

\begin{algorithm}
\caption{Learning logit differences \label{alg:binsearch_logit}}
\begin{algorithmic}
\State $\alpha_i \gets -B, \beta_i \gets 0$
\While{$\beta_i - \alpha_i > \varepsilon$}
\If{$\api\br{p, \logitbias=\{i: -\frac{\alpha_i + \beta_i}{2}\}}=0$}
\State $\beta_i \gets \frac{\alpha_i + \beta_i}{2}$
\Else
\State $\alpha_i \gets \frac{\alpha_i + \beta_i}{2}$
\EndIf
\State Return $\frac{\alpha_i + \beta_i}{2}$
\EndWhile
\end{algorithmic}

\end{algorithm}

\paragraph{Analysis.} This attack, while
inefficient, correctly extracts the logit vector.

\begin{lemma}
For every token $i$ such that $\logit_i - \logit_0 \geq -B$, Algorithm \ref{alg:binsearch_logit} outputs a value that is at most $\varepsilon$ away from the $\logit_i - \logit_0$ in at most $\log\br{\frac{B}{\varepsilon}}$ API queries.
\end{lemma}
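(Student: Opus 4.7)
The plan is to verify a loop invariant $\alpha_i \le \logit_i - \logit_0 \le \beta_i$ and then apply the standard binary-search halving argument. First I would check initialization: since token $0$ is defined as the argmax of $\api(p, \logitbias=\{\})$, we have $\logit_0 \ge \logit_i$, so $\logit_i - \logit_0 \le 0 = \beta_i$; the hypothesis of the lemma gives $\logit_i - \logit_0 \ge -B = \alpha_i$.

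Next I would analyze one iteration. Let $m = (\alpha_i+\beta_i)/2$. Under the bias $\logitbias = \{i:-m\}$, the effective logit of token $i$ becomes $\logit_i - m$ and every other logit is unchanged. Since all tokens $j \ne 0, i$ satisfy $\logit_j \le \logit_0$ by the choice of labeling, the only token that could displace $0$ as the argmax is $i$. Hence $\api(p, \{i:-m\}) = 0$ iff $\logit_i - m < \logit_0$ iff $m > \logit_i - \logit_0$ (ties contribute a measure-zero set of bias values and can be absorbed into either branch). The branch assigning $\beta_i \gets m$ is taken exactly when $\logit_i - \logit_0 < m$, and the branch assigning $\alpha_i \gets m$ is taken exactly when $\logit_i - \logit_0 > m$, so the invariant is preserved in either case.

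Finally I would track the interval length. It starts at $\beta_i - \alpha_i = B$ and is halved by each iteration, so after $k$ iterations it equals $B/2^k$. The loop terminates as soon as $\beta_i - \alpha_i \le \varepsilon$, which requires $k \ge \log_2(B/\varepsilon)$ iterations, each using exactly one API call. The returned midpoint $(\alpha_i+\beta_i)/2$ lies within $(\beta_i-\alpha_i)/2 \le \varepsilon/2 \le \varepsilon$ of any point in $[\alpha_i,\beta_i]$, and in particular of $\logit_i - \logit_0$.

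The argument is essentially a textbook binary-search bound; the only delicate point is ruling out the possibility that some third token $j \notin \{0,i\}$ becomes the argmax at an intermediate query, which would corrupt the comparison. This is handled cleanly by the ``token $0$ is the unbiased argmax'' assumption together with the fact that we apply bias only to coordinate $i$, so no coordinate other than $i$ can ever overtake $0$. A secondary subtlety is the tie case $m = \logit_i - \logit_0$, but since the invariant uses non-strict inequalities both branches preserve it, so no special handling is required.
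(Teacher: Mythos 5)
Your proof is correct and follows essentially the same route as the paper's: establish the invariant $\alpha_i \le \logit_i - \logit_0 \le \beta_i$ by induction and apply the standard halving bound. The only difference is that you spell out two details the paper leaves implicit (that no third token can become the argmax since only coordinate $i$ is biased, and that ties are harmless under non-strict inequalities), which strengthens rather than changes the argument.
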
 
\begin{proof}
 The API returns the (re-ordered) token $0$ as long as the logit bias added is smaller than $\logit_i - \logit_0$. By the assumption, we know that $\logit_i-\logit_0 \in [-B, 0]$. The algorithm ensures that $\beta_i \geq \logit_i - \logit_0 \geq \alpha_i$ at each iteration, as can be seen easily by an inductive argument. Further, $\beta_i-\alpha_i$ decreases by a factor of $2$ in each iteration, and hence at termination, we can see that the true value of $\logit_i-\logit_0$ is sandwiched in an interval of length $\varepsilon$. Furthermore, it is clear that the number of iterations is at most $\log_2\br{\frac{B}{\varepsilon}}$ and hence so is the query cost of this algorithm.
\end{proof}

\paragraph{Limitations of the approach.} If $\logit_i - \logit_0 < -2B$ it is easy to see there is no efficient way to sample the token $i$, hence no way to find information about $\logit_i$ without logprob access.
There is a way to slightly increase the range for $-2B \le \logit_i-\logit_0 \le -B$ by adding negative logit biases to the tokens with the largest logit values,
but we skip the details since for most models, for the prompts we use, the every token satisfies $\logit_i - \logit_0 > -B$.

\paragraph{Related work.}
Concurrent work \cite{morris2023language} has discussed this method of extracting logits.

\subsection{Improved Logprob-free Attack: Hyperrectangle Relaxation Center}
\label{sec:logprobfree:hyperrectangle-center}

We can improve the previous attack by modifying the logit bias of multiple tokens at once.
%

\paragraph{API:} We use the same API as in the previous section, with the additional constraint that the $\api$ accepts at most $N+1$ tokens in the logit bias dictionary. We again first run a query $\api(p, \logitbias=\{\})$ to identify the most likely token and set its index to $0$.
Our goal is to approximate $\logit_i - \logit_0$ for $N$ different tokens.
If $N < l - 1$, we simply repeat the same algorithm for different batches of $N$ tokens $\frac{l-1}{N}$ times.

\paragraph{Method.}
\begin{algorithm}
\caption{Learning logit differences with multi-token calls}\label{alg:binsearch_logit_multi}
\begin{algorithmic}
\State $\alpha_i \gets -B, \beta_i \gets 0 \quad \forall i=1, \ldots, N$
\State $\calC=\{\logit: \logit_i - \logit_0 \leq B \quad \forall i=1, \ldots, N \}$
\For{$T$ rounds}
\State $b_i \gets -\frac{\alpha_i + \beta_i}{2}$ for $i=0,\ldots, N$
\State $k \gets \api(p, \logitbias=\{{0: b_0, 1: b_1, \dots, N: b_N}\})$
\For{$j \neq k$}
\State $\calC \gets \calC \cap \{\logit: \logit_k + b_k
\geq \logit_j + b_j\}$
\EndFor
\For{$i=0,\ldots, N$}
\State $\alpha_i \gets \displaystyle\min_{\logit\in\calC} \logit_i-\logit_0$
\State $\beta_i \gets \displaystyle\max_{\logit \in \calC} \logit_i-\logit_0$
\EndFor
\EndFor
\State Return $[\alpha_i, \beta_i] \quad \forall i \in \{0, \ldots, N\}$
\end{algorithmic}

\end{algorithm}

Our approach queries the API with the logit bias set for several tokens in parallel. The algorithm proceeds in \emph{rounds}, where each round involves querying the API with the logit bias set for several tokens. 

Suppose that the query returns token $k$ as output when the logit bias was set to $\{i: b_i\}$ for $i=1,\ldots, l$ and the prompt is $p$. Then, we know that $\logit_k + b_k \ge \logit_j + b_j$ for all $j \neq k$ by the definition of the API. 

This imposes a system of linear constraints on the logits. By querying the model many times, and accumulating many such systems of equations, we can recover the logit values more efficiently. To do this, we accumulate all such linear constraints in the set $\calC$, and at the end of each round, compute the smallest and largest possible values for $\logit_i-\logit_0$ by solving a linear program that maximizes/minimizes this value over the constraint set $\calC$. Thus, at each round, we can maintain an interval that encloses $\logit_i-\logit_0$, and refine the interval at each round given additional information from that round's query. After $T$ rounds (where $T$ is chosen based on the total query budget for the attack), we return the tightest known bounds on each logit. 

\begin{lemma}
\label{lemma:query-linear-constraints}
Suppose that $\logit_i-\logit_0 \in [-B, 0]$ for all $i=1, \ldots, l$. Then, Algorithm \ref{alg:binsearch_logit_multi} returns an interval $[\alpha_i, \beta_i]$ such that $\logit_i-\logit_0 \in [\alpha_i, \beta_i]$ for each $i$ such that $\logit_i - \logit_0 \in [-B, 0]$. Furthermore, each round in the algorithm can be implemented in computation time $O(N^3)$ (excluding the computation required for the API call). 
\end{lemma}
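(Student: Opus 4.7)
The plan is to split the proof into two parts: (a) correctness, that the returned intervals always sandwich the true values $\logit_i-\logit_0$, and (b) per-round complexity $O(N^3)$.

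For correctness I would argue by induction on rounds that the \emph{true} logit vector $\logit^\star$ lies in $\calC$ throughout. The base case is immediate: initially $\calC=\{\logit:\logit_i-\logit_0\le B\}$, and the hypothesis $\logit_i^\star-\logit_0^\star\in[-B,0]$ implies $\logit^\star\in\calC$. For the inductive step, suppose $\logit^\star\in\calC$ at the start of a round and the API, queried with bias $\{0\!:\!b_0,\ldots,N\!:\!b_N\}$, returns token $k$. By the definition of $\api$ (argmax under temperature 0) we have $\logit_k^\star+b_k\ge \logit_j^\star+b_j$ for every $j\ne k$, which is exactly the family of constraints added to $\calC$ in that round. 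Hence $\logit^\star$ still belongs to the updated $\calC$. Since $\logit^\star\in\calC$ at every round, the linear programs defining $\alpha_i=\min_{\logit\in\calC}(\logit_i-\logit_0)$ and $\beta_i=\max_{\logit\in\calC}(\logit_i-\logit_0)$ must satisfy $\alpha_i\le \logit_i^\star-\logit_0^\star\le\beta_i$, proving the containment claim.

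For the complexity bound I would observe that $\calC$ is always a polytope in $N+1$ variables (the logits $\logit_0,\ldots,\logit_N$), described by linear inequalities, and the per-round work consists of: (i) computing the new bias vector $b_i=-(\alpha_i+\beta_i)/2$ in $O(N)$; (ii) appending the $N$ new inequalities $\logit_k+b_k\ge\logit_j+b_j$, costing $O(N)$ per constraint; and (iii) solving $2N$ LPs (one for each $\min$ and $\max$) over the same polytope, with differing linear objectives $\logit_i-\logit_0$. To get the stated $O(N^3)$ per round, I would appeal to the fact that the polytope lives in $\R^{N+1}$: after pruning redundant constraints (or equivalently, warm-starting from the previous round's basis of $N+1$ active constraints), each LP can be re-optimized in $O(N^2)$, and the $2N$ LPs therefore cost $O(N^3)$ in total per round. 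I would make explicit that the cost is stated \emph{per round} so that the amortized constraint set at each round has size $O(N)$ rather than $O(TN)$.

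The main obstacle is the complexity claim rather than correctness. Naively, after $T$ rounds the polytope has $\Theta(TN)$ constraints, so an out-of-the-box LP solve would not yield a clean $O(N^3)$ bound. The proof therefore hinges on how one represents and updates $\calC$: one needs either to keep only the (at most $O(N)$) active facets at any time, or to use simplex-style warm starting so that each re-optimization costs $O(N^2)$. I would spell out one such implementation choice explicitly, since this is the only non-routine step; the correctness part is essentially bookkeeping once the API's argmax semantics are plugged into the inductive step above.
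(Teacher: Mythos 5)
Your correctness argument is exactly the paper's: an induction showing the true logit vector never leaves $\calC$ because the argmax semantics of $\api$ make every added constraint $\logit_k + b_k \ge \logit_j + b_j$ valid for the true logits, whence the LP optima sandwich $\logit_i - \logit_0$. That half is fine.

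The complexity half has a genuine gap. You correctly identify that after $T$ rounds the polytope has $\Theta(TN)$ constraints and that a naive LP solve will not give $O(N^3)$, but your proposed fix --- warm-starting the simplex method from the previous basis, or ``pruning redundant constraints,'' and asserting each re-optimization costs $O(N^2)$ --- is not a theorem. Warm-started simplex has no worst-case polynomial bound per re-optimization, and deciding which constraints are redundant is itself an LP, so neither route yields the claimed bound. The paper's argument rests on a structural observation you miss: every constraint in $\calC$, including the initial box constraints, is a \emph{difference constraint} of the form $\logit_i - \logit_j \le c_{ji}$, so for each ordered pair $(j,k)$ only the tightest constant matters and $\calC$ is always represented by at most $O(N^2)$ numbers independent of $T$. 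The paper's Lemma~\ref{lemma:bounds-update} then shows $\max_{\logit\in\calC}(\logit_i-\logit_0)$ equals the shortest-path distance from vertex $0$ to vertex $i$ in the induced weighted graph (feasibility of the distance labels uses that the true logits witness the absence of negative cycles), and the lower bounds come from the reversed graph; Bellman--Ford on a graph with $N+1$ vertices and $O(N^2)$ edges gives all $2N$ bounds in $O(N^3)$ per round. Without this reduction, or some equally concrete substitute, the $O(N^3)$ claim in your write-up is unsupported.
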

\begin{proof}
Algorithm \ref{alg:binsearch_logit_multi} maintains the invariant that $\logit_i-\logit_0 \in [\alpha_i, \beta_i]$ in each round. We will prove by induction that this is true and that the true vector of logits always lies in $\calC$. Note that by the assumption stated in the Lemma, this is clearly true at the beginning of the first round. Suppose that this is true after $K < T$ rounds. Then, in the $K+1$-th round, the constraints added are all valid constraints for the true logit vector, since the API returning token $k$ guarantees that $\logit_k+\logitbias_k \ge \logit_j +\logitbias_j$ for all $j \neq k$. Hence, by induction, the algorithm always ensures that $\logit_i - \logit_0 \in [\alpha_i, \beta_i]$. 

In \Cref{sec:shortest-path},
we show the LP to compute $\alpha_i, \beta_i$ for all $i$ can be seen as an all-pairs shortest paths problem on graph with edge weights $c_{jk}=\min_{\text{rounds}} \logitbias_j-\logitbias_k$ where the minimum is taken over all rounds where the token returned was $k$. This ensures the computation complexity of maintaining the logit difference intervals is $O(N^3)$.
\end{proof}

\subsubsection{Shortest-path Formulation of the Logprob-free Attack LP}
\label{sec:shortest-path}

It is actually possible to improve the computational efficiency of the hyperrectangle relaxation of the polytope $\calC$.
Here we show how to formulate this problem as a shortest path problem on a weighted graph.
This enables us to quickly compute the exact $[\alpha_i, \beta_i]$ for all $i \in \{1, \dots, N\}$ after each query.
\begin{lemma}
\label{lemma:bounds-update}
Let $G = (\{0, 1, \dots, N\}, E)$ be a weighted directed graph without negative cycles.  Let $\mathcal{P} \subset \mathbb{R}^{n+1}$ be the solution set of a system of linear inequalities:
\begin{align*}
  \logit_i - \logit_j &\le c_{ji} 
  \quad \forall~ j \xrightarrow{c_{ji}} i  \quad\in E
\end{align*}
Then if $\logit_0 = 0$, we have
\begin{align*}
  \max_{x \in \calC} \logit_i &= \text{distance in } G \text{ from } 0 \text{ to } i.
\end{align*}
\end{lemma}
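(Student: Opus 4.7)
The plan is to prove the two inequalities $\max_{\logit \in \calP} \logit_i \le d(0, i)$ and $\max_{\logit \in \calP} \logit_i \ge d(0, i)$ separately, where $d(0,i)$ denotes the shortest-path distance from $0$ to $i$ in $G$. Both directions are elementary once one notices that each edge constraint $\logit_i - \logit_j \le c_{ji}$ is exactly the statement that a single edge contributes at most $c_{ji}$ to the difference of endpoint values, so path constraints telescope.

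For the upper bound, I would fix any directed path $0 = j_0 \xrightarrow{c_{j_0 j_1}} j_1 \xrightarrow{c_{j_1 j_2}} \cdots \xrightarrow{c_{j_{k-1} j_k}} j_k = i$ in $G$ and use each edge's inequality on a point $\logit \in \calP$:
\[
\logit_{j_{s+1}} - \logit_{j_s} \le c_{j_s j_{s+1}} \quad \text{for } s = 0, \ldots, k-1.
\]
Summing these and using $\logit_0 = 0$ gives $\logit_i \le \sum_{s=0}^{k-1} c_{j_s j_{s+1}}$, i.e.\ the length of the path. Minimizing over paths yields $\logit_i \le d(0,i)$ for every $\logit \in \calP$.

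For the lower bound, I would exhibit a feasible point achieving value $d(0,i)$ at coordinate $i$. Define $\logit^*_j = d(0,j)$ for each $j \in \{0,1,\ldots,N\}$ (well-defined and finite because $G$ has no negative cycles; we may restrict to vertices reachable from $0$, setting others to any sufficiently small value, or assume reachability in the intended use). By the triangle inequality for shortest paths, for every edge $j \xrightarrow{c_{ji}} i$ in $E$ we have
\[
d(0,i) \le d(0,j) + c_{ji},
\]
which rearranges to $\logit^*_i - \logit^*_j \le c_{ji}$, exactly the constraint defining $\calP$. Since $\logit^*_0 = d(0,0) = 0$, the vector $\logit^*$ lies in $\calP$ and satisfies $\logit^*_i = d(0,i)$, so $\max_{\logit \in \calP} \logit_i \ge d(0,i)$.

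The only real subtlety is handling vertices not reachable from $0$ or the role of the no-negative-cycle hypothesis. The no-negative-cycle assumption is essential for the upper bound to be meaningful (otherwise shortest paths are $-\infty$) and for the lower bound's triangle inequality to hold; I would note this explicitly but expect no technical obstacle, as it is a standard prerequisite for shortest-path arguments. Combining both inequalities gives the claimed equality.
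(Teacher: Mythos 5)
Your proposal is correct and follows essentially the same route as the paper's proof: the upper bound by telescoping edge constraints along a path from $0$ to $i$, and the lower bound by showing the vector of shortest-path distances is itself feasible (the paper phrases this feasibility check as a proof by contradiction, which is just the contrapositive of your triangle-inequality step). Your explicit attention to unreachable vertices and to the role of the no-negative-cycle hypothesis is a minor refinement the paper leaves implicit.
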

\begin{proof}
  Let $e_{0j_1}, e_{j_1j_2}, \dots, e_{j_{m-1}i}$ be the edges of the minimum distance path from $0$ to $i$ in $G$.
  We have 
  \begin{align*}
    \logit_i & \le \logit_{j_{m-1}} + c_{j_{m-1}i} \le \ldots \\
    & \le \logit_0 + \sum_{t=1}^{m-1} c_{j_{t+1}j_t} = \sum_{t=1}^{m-1} c_{j_{t+1}j_t},
  \end{align*}
  hence the shortest path is an upper bound on $\logit_i$. 
  To prove feasibility, we claim that setting $\logit_i$ to be the distance from $0$ to $i$ satisfies all the inequalities.
  Assume some inequality $\logit_i - \logit_j \le c_{ji}$ is violated. 
  Then we can go from $0 \to j \to i$ in $G$ with a total weight of $\logit_j + c_{ji} < \logit_i$, which contradicts the assumption that $\logit_i$ is the distance from $0$ to $i$.
\end{proof}

To apply this to our setting, note that (1) all constraints, even the initial $\alpha_i \le \logit_i \le \beta_i$, are of the required form;
(2) the graph has no negative cycles because the true logits give a feasible solution.
(3) we can get the lower bounds by applying the same procedure to the graph induced by inequalities on $-\logit_i$.

We can find the distances from $0$ to all other vertices 
using the Bellman-Ford algorithm in $O(N^3)$ time. If $N = 300$, this is at most comparable to the latency of $\api$.
Since only $N$ edges of the graph update at each step,
we note that the heuristic of just updating and doing a few incremental iterations of Bellman-Ford 
gets $[\alpha_i, \beta_i]$ to high precision in practice.
The number of API queries and the token cost, of course, remains the same.


\subsection{Improved Logprob-free Attack: Better Queries on Hyperrectangles}
\label{sec:logprobfree:hyperrectangle-fancy}

The main issue of the previous approach is that some tokens are sampled more often than others,
even in the case our prior for the $\logit$ vector is uniform over $[-B, 0]$.
This is because the "centering of the hyperrectangle" logit bias does not partition the hyperrectangle into equally-sized parts labeled by the argmax coordinate. 
For example, if  $\beta_i - \alpha_i \ll \beta_j - \alpha_j$, under an uniform prior over $[\alpha_i, \beta_i] \times [\alpha_j, \beta_j]$, $j$ will be much more likely to be the output token than $i$.
Hence, in Algorithm \ref{alg:binsearch_logit_multi} we rarely get constraints lower-bounding $\logit_i$ in terms of other logits, which makes for weaker relaxations of $\mathcal{C}$.

Our solution is to bias tokens so that the output token distribution is closer to uniform; in particular, biasing the token with the smallest $\beta_t - \alpha_t$ (the $0$ token) to have probability exactly $1/(N+1)$ given an uniform prior over the hyperrectangle.
One logit bias that satisfies this is:
\begin{align}
b_i &= -(1-c) \alpha_i - c \beta_i \quad \forall i=0,\ldots,N \notag \\ 
\text{ where } \quad c &= \exp(-\log(N+1) / N). \label{eq:one-over-n}
\end{align}
We now run Algorithm \ref{alg:binsearch_logit_multi}, with one simple modification: we replace $b_i = -\frac{\alpha + \beta}{2}$ with $b = - (1-c) \alpha - c \beta$.
As can be seen in Table \ref{tab:model_comparison_logit_estimation}, the modified algorithm outperforms the method in \ref{sec:logprobfree:hyperrectangle-center} significantly.

The goal of balanced sampling of all output tokens can be approached in many ways. For example, we could tune $c$ in the above expression;
bias tokens which $\api$ hasn't returned previously to be more likely; or
solve for the exact logit bias that separates $\calC$ (or some relaxation) into equal parts.
However, we show in \Cref{sec:howfaroptimal} that, under some simplifying assumptions, the queries/logit metric of this method in Table \ref{tab:model_comparison_logit_estimation} is surprisingly close to optimal.




\section{How Far Are Our Logprob-Free Attacks From Optimal?}
\label{sec:howfaroptimal}


In the logprob-free API, we have produced attacks capable of recovering logits and ultimately the embedding hidden dimension and embedding matrix up to a similarity transform. We now provide lower bounds on the minimum number of queries required by \emph{any} attacker attempting model stealing under the logprob-free API threat model.

\begin{lemma}
\label{lemma:information-lower-bound-logprobfree}
Assume the entries of $\logit \in \R^l$ are i.i.d. uniform over $[-B, 0]$. To recover the vector $\logit$ up to $\infty$-norm error $\varepsilon$, the number of queries to $\api(p, \cdot)$ we need is at least:
\[
\frac{\vocabsize \log_2(B/\varepsilon)}{\log_2(\vocabsize)}.
\]
\end{lemma}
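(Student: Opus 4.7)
The plan is to use a transcript-counting argument that exploits the fact that each API response is a single token from a vocabulary of size $l$, so at most $\log_2 l$ bits of information leak per query. The prior is only used to ensure nontriviality; the bound is essentially a worst-case counting statement.

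First, I would fix the attacker's internal randomness and view the attack as a deterministic decision tree of depth $Q$: the root is the first query (a function of $p$ and the coins), each internal node has at most $l$ children (one per possible returned token), and each leaf is labelled with the attacker's output $\hat z \in \mathbb{R}^l$. The total number of leaves, and hence the size of the image of the map $z \mapsto \hat z(z)$ for this fixed choice of coins, is at most $l^Q$.

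Next, I would construct a large $\ell_\infty$-packing of $[-B,0]^l$. Place grid centers at spacing $2\varepsilon$ along each axis, giving a set $S$ of $\lfloor B/(2\varepsilon)\rfloor^l$ points, any two of which are at $\ell_\infty$-distance at least $2\varepsilon$. If the attack succeeds in recovering $z$ up to $\ell_\infty$-error $\varepsilon$ on every $z \in S$, then distinct elements of $S$ must produce distinct outputs $\hat z$, since their $\varepsilon$-balls are disjoint. Hence the image size is at least $|S|$, and combining with the transcript bound yields
\[
l^Q \;\ge\; \lfloor B/(2\varepsilon)\rfloor^l \;\Longrightarrow\; Q \;\ge\; \frac{l \log_2(B/\varepsilon)}{\log_2 l} - O\!\left(\frac{l}{\log_2 l}\right),
\]
matching the stated bound up to lower-order terms (which the lemma's asymptotic phrasing absorbs; a clean constant can be recovered by stating the packing with spacing $\varepsilon$ and demanding error $<\varepsilon/2$).

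The only subtle step, and the main obstacle, is handling adaptivity and randomization properly. For adaptivity the decision-tree view already suffices. For a randomized attacker that succeeds with high probability over both the uniform prior and its coins, I would fix the coins to the value achieving best success probability (by averaging), and then observe that the fraction of $z \in S$ on which the resulting deterministic attack can succeed is at most $|\mathrm{image}|/|S| \le l^Q/|S|$; demanding success probability bounded away from zero (or, for a stronger worst-case reading, success on all $z \in S$) gives the same $l^Q \ge |S|$ inequality up to a constant. The uniform prior enters only to guarantee that each $z \in S$ has positive density, so the worst-case bound transfers to the average-case one.
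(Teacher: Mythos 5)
Your proposal is correct and takes essentially the same route as the paper: the paper's proof is an informal ``total information content divided by bits per query'' argument ($\vocabsize\log_2(B/\varepsilon)$ bits in the target, at most $\log_2(\vocabsize)$ bits per response since each answer is one of $\vocabsize$ tokens), and your decision-tree/packing argument is the standard rigorous formalization of exactly that count. Your treatment of adaptivity, randomization, and the $\varepsilon$-versus-$2\varepsilon$ spacing constant is in fact more careful than the paper's sketch, which glosses over these lower-order issues.
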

\begin{proof}

The information content of a single logit value in $[-B, 0]$
up to $\infty$-norm error $\varepsilon$ is $\log_2(B/\varepsilon)$,
assuming a uniform prior over $\varepsilon$-spaced points in the interval.
Since the logits are independent, the information encoded in $\vocabsize$ logit values up to $\infty$-norm error $\varepsilon$ is $\vocabsize \log_2(100/\varepsilon)$.

Any single query to $\api$, no matter how well-crafted, yields at most $\log_2(\vocabsize)$ bits, because the output is one of $\vocabsize$ distinct values. The minimum number of queries required is at least the total information content divided by the information per query, yielding the lower bound
$
\vocabsize \log_2(B/\varepsilon) / \log_2(\vocabsize).
$
\end{proof}

The restriction of biasing at most $N$ tokens at a time gives us a lower bound of
\[
\frac{\vocabsize \log_2(B/\varepsilon)}{\log_2(N)}
\]
queries, which is a factor of $\log_2(\vocabsize) / \log_2(N)$ worse. 
For $N = 300$ and $\vocabsize \approx 100{,}000$, this is only a factor of $2$.

For $B = 100$ and $N = 300$, we thus need at least 
\[
\frac{\log_2(B / \varepsilon)}{\log_2(N)} \approx 0.81 + 0.12 \log_2(1/\varepsilon) 
\]
queries per logit. 
If we want between 6 and 23 bits of precision, the lower bound corresponds to 1.53 to 3.57 queries per logit.
We see that the best logprob-free attack in Table \ref{tab:model_comparison_logit_estimation} is only about 1 query per logit worse than the lower bound.

The main unrealistic assumption in Lemma \ref{lemma:information-lower-bound-logprobfree} is that the prior over the logit values is i.i.d. uniform over an interval. A better assumption might be that most of the logit values come from a light-tailed unimodal distribution. We leave more realistic lower bounds and attacks that make use of this better prior to future work.

\section{Recovering $\mathbf{W}$ up to an orthogonal matrix}
\label{app:recovery-orthogonal}

In this section, we present an algorithm for extracting $\Et$ up to an orthogonal $h \times h$ matrix, instead of a non-singular $h \times h$ matrix as in Appendix \ref{sec:proof_of_42}.
This algorithm requires solving a system of $O(h^2)$ linear equations and is hence prohibitive for large models in production, some of which have $h>1000$. However, we present proof that this technique works in practice in a notebook\footnote{See here: \href{https://colab.research.google.com/drive/1DY-q6rnokD6q8RQBwxVmAbtZ_85zq4_W?usp=sharing}{Recovering $W$ up to an orthogonal matrix (Colab notebook)}} on Pythia-14M, despite the  assumptions bellow (\Cref{subapp:assumptions}).

\subsection{Assumptions}
\label{subapp:assumptions}

We make a few simplifying assumptions:

\begin{enumerate}
    \item We merge the final normalization layer weights $\gamma$ into $\Et$ by linearity.\footnote{For a full explanation of this method of rewriting the unembedding matrix, see Appendix A.1, `Folding LayerNorm' in \citet{wesuniversal}. The intution is that $\gamma$ is another linear transformation applied just before $\W$ (a diagonal matrix). Therefore together they are together one matrix multiplication.}
    \item We assume that the output matrix always outputs 0-centered logits (as in \Cref{subsubsec:stealing-theory}). This isn't a restrictive assumption since model logprobs are invariant to a bias added to the logit outputs.
    \item We assume the numerical precision is high enough that during the final normalization layer, the hidden states are on a sphere.
    \item There is no degenerate lower-dimensional subspace containing all $g_{\theta}(p)$ for all our queries $p$.
    \item We assume the $\varepsilon$ in RMSNorm/LayerNorm is negligible.
\end{enumerate}

\renewcommand{\A}{\mathbf{A}}
\renewcommand{\O}{\mathbf{O}}
\newcommand{\X}{\mathbf{X}}
\renewcommand{\M}{\mathbf{M}}


\subsection{Methodology}

\paragraph{Intuition.} The main idea is to exploit the structure imposed by the last two layers of the model: LayerNorm/RMSNorm, which projects model internal activations to a sphere, and the unembedding layer. The model's logits lie on an ellipsoid of rank $h$ due to \Cref{imageellips}:

\begin{lemma}\label{imageellips}
The image of an ellipsoid under a linear transformation (such as the unembedding layer) is itself an ellipsoid.
\end{lemma}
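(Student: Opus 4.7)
The plan is to use the \emph{parametric} characterization of an ellipsoid as an affine image of a sphere, which reduces the lemma to a trivial composition-of-maps statement.

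First, I would write the given ellipsoid $E \subseteq \mathbb{R}^n$ in the form
\[
E = \{c + A y : y \in S^{k-1}\}
\]
for some center $c \in \mathbb{R}^n$, some matrix $A \in \mathbb{R}^{n \times k}$, and the unit sphere $S^{k-1} \subset \mathbb{R}^k$, where $k$ is the rank of the ellipsoid (the dimension of the smallest affine subspace containing $E$). This parametric form is equivalent to the usual quadratic-form definition on the affine hull of $E$; the equivalence is a standard consequence of SVD, writing $A = U \Sigma V^T$ so that the columns of $U$ give the principal axes and the diagonal entries of $\Sigma$ give the axis lengths.

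Next, for the linear transformation $T : \mathbb{R}^n \to \mathbb{R}^m$ of interest, I would simply compute
\[
T(E) = \{T(c + A y) : y \in S^{k-1}\} = \{Tc + (TA)\, y : y \in S^{k-1}\}.
\]
This is again of the form ``center plus matrix times point on a unit sphere,'' so $T(E)$ is an ellipsoid with center $Tc$ and shape matrix $TA \in \mathbb{R}^{m \times k}$. The rank of the image ellipsoid equals $\mathrm{rank}(TA) \le k$, which is precisely how a sphere in the $h$-dimensional hidden space gets mapped by $\Et$ (with $\mathrm{rank}(\Et) = h$) to a rank-$h$ ellipsoid in the $l$-dimensional logit space, as claimed in the surrounding discussion.

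The only step needing care is the equivalence between the parametric and quadratic definitions in the possibly-degenerate case, since $TA$ need not be square or invertible. I would handle this by taking the compact SVD $TA = \tilde U \tilde \Sigma \tilde V^T$ with $\tilde U \in \mathbb{R}^{m \times r}$ and $r = \mathrm{rank}(TA)$, and observing that $T(E) - Tc$ lies in the column span of $\tilde U$; within that $r$-dimensional subspace the set is a bona fide full-rank ellipsoid defined by a positive definite quadratic form. This definitional bookkeeping is the only potential obstacle; once it is in place, the proof is essentially the one-line observation that the composition of an affine map with an affine map is affine.
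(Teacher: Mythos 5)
Your proof is correct; the paper states this lemma without proof, treating it as a standard linear-algebra fact, and your parametric argument (ellipsoid as affine image of a sphere, composed with the linear map) is exactly the reasoning the paper implicitly relies on when it maps the sphere of hidden states through $\U^\top \cdot \Et$ to a rank-$h$ ellipsoid. Your care with the degenerate/rank-deficient case is a welcome addition, since the paper's subsequent steps do restrict to the column span via the truncated SVD for precisely this reason.
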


Thus, we can project the model's logits outputs to that subspace using $\U$ the truncated SVD matrix; $\X=\U^\top \Q$.  An ellipsoid is defined with the quadratic form as in \Cref{eqnellips}: 
\begin{lemma}\label{eqnellips}
An ellipsoid is defined by a single semipositive-definite symmetric matrix $\A$ and a center by the following equation:
$(x-c)^\top \A (x-c)=1$.
\end{lemma}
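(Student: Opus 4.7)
The plan is to verify this in both directions: every equation of the form $(x-c)^\top \A (x-c) = 1$ with $\A$ symmetric positive semidefinite defines an ellipsoid, and every ellipsoid can be written in this form. First, by the translation $x \mapsto x - c$, I would reduce to the case $c = 0$ without loss of generality, since translation preserves both the shape of the equation and the class of ellipsoids as sets.

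Next, I would apply the spectral theorem to diagonalize $\A = Q D Q^\top$ where $Q$ is orthogonal and $D = \diag(\lambda_1, \ldots, \lambda_d)$ with each $\lambda_i \ge 0$. Under the orthogonal change of coordinates $y = Q^\top x$, the equation becomes $\sum_i \lambda_i y_i^2 = 1$, which, when each $\lambda_i > 0$, is the standard-form equation of an ellipsoid with semi-axes $1/\sqrt{\lambda_i}$ aligned with the eigenvectors of $\A$. Pulling back through $Q$ and the translation yields the ellipsoid in its original position and orientation. For the converse direction, I would invoke the standard characterization of an ellipsoid as the image of the unit sphere under an affine map $x \mapsto Mx + c$ with $M$ invertible; substituting $x = M^{-1}(y-c)$ into $\|x\|^2 = 1$ produces $(y-c)^\top \A (y-c) = 1$ with $\A = M^{-\top} M^{-1}$, which is symmetric and positive definite by construction, closing the bijection.

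The main subtlety, and essentially the only point that needs care, is the semidefinite-versus-definite distinction. If $\A$ has a zero eigenvalue, the locus is unbounded in the corresponding direction (a generalized cylinder rather than a bounded ellipsoid), so the statement is cleanest when $\A$ is strictly positive definite, and the lemma should be read accordingly. In the intended use, the relevant set lies on the $h$-dimensional subspace exposed by the truncated SVD feeding Lemma \ref{imageellips}, where the restriction of $\A$ will indeed be positive definite, so this caveat is harmless. Beyond this definitional point, the proof is a direct application of the spectral theorem and an orthogonal change of variables, with no substantive technical obstacle.
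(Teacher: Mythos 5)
The paper states this lemma as a standard definitional fact and gives no proof of its own, so there is nothing to compare your argument against; your spectral-theorem argument (diagonalize, change coordinates orthogonally, read off semi-axes, and invert via the affine image of the unit sphere) is the standard and correct justification. Your caveat is well-taken and in fact identifies a small imprecision in the lemma as stated: with $\mathbf{A}$ merely positive \emph{semi}definite the locus $(x-c)^\top \mathbf{A}(x-c)=1$ is an unbounded cylinder, so a bounded ellipsoid requires $\mathbf{A}$ to be positive definite on the ambient space---which, as you note, is exactly what holds in the paper's application after projecting onto the rank-$h$ subspace via the truncated SVD, where the non-degeneracy assumption guarantees the restricted form is definite.
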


By expanding this formula $ x^\top \A x -2\A c x + c^\top \A c = 1$, substituting $d = \A c$ we end up with a system that all (projected) model outputs should satisfy. This system is linear in $\binom{h+1}{2}$ entries of $\A$ (as $\A$ is symmetric), and the $h+1$ entries of $d$. The system doesn't constrain the free term $c^\top \A c$, and we can get rid of it by subtracting $x_0$ from other outputs of the model, effectively forcing the ellipsoid to pass through the origin, and thus satisfy $c^\top \A c = 1$. Using Cholesky decomposition on the PSD symmetric matrix $\A = \M \M ^\top$, we would end up with $\M$ a linear mapping from the unit-sphere to the (origin-passing) ellipsoid, which is equivalent to $\W$ up to an orthogonal projection (see \Cref{orthogw}). This should be centered at  $\W b = c - x_0 $.

\paragraph{Procedure.} The procedure we took in order to recover $ \W $ up to an orthogonal matrix as implemented in the notebook \ref{??} is described as follows at a high level (we expand in detail on several of the steps below):

\begin{enumerate}
    \item Collect $\binom{h+1}{2} + (h+1)$ logits and store them in a matrix $\mathbf{Q}$.
    \item Shift to origin: $\Q \gets \Q - \Q[0]$.
    \item Perform SVD on Q: $\U\mathbf{\Sigma}\V^\top=\Q$.
    \item Find the model's hidden dimension $h$ (e.g. using \Cref{sec:attack_detail}) and truncate $\U$
    \item Solve $ x^\top \A x -2dx = 0$ using SVD/QR to find the nullspace, such that
        \begin{itemize}
            \item $d = \A c$
            \item $c^\top \A c = 1$
            \item $\A$ is symmetric.
        \end{itemize}
    This process has time complexity: $O(h^6)$ (in many cases much faster in practice).
    \item Find $c=\A^{-1}d$, and scale $c$ and $\A$ to satisfy $ c^\top \A c = 1 $.
    \item Use Cholesky decomposition to find $\M$ s.t. $\M^\top\M=\A$.
    \item Obtain $\W = \U \cdot \M^{-1} \cdot \O$ for some orthogonal matrix $\O$.
\end{enumerate}



In steps 3-4, we use the compact SVD on the query output matrix $\Q = \U \cdot \mathbf{\Sigma} \cdot \Vt$.
Here $\Q \in \mathbb{R}^{l \times n}$, $\U \in \mathbb{R}^{l \times h}$, $\mathbf{\Sigma} \in \mathbb{R}^{h \times h}$, and $\Vt \in \mathbb{R}^{h \times n}$.
Note that the points $g_{\theta}(p)$ lie on a sphere in $\R^h$, and $\U^\top \cdot \Et \in \R^{h \times h}$,
hence $\U^\top \cdot \Et \cdot g_{\theta}(p)$ lie on an ellipsoid in $\R^h$. 
From now on, it is convenient to work with the points $\X = \U^\top \cdot \Q$;
As centered-ellipsoids are equivalently defined by $x^\top \A x = 1$ for some positive semidefinite (symmetric) matrix $\A \in \R^{h \times h}$,
this implies that we can write $\A = \M^\top \cdot \M$ for some $\M$ (step 7 of the procedure) which will work since $\A$ is positive semidefinite and symmetric since the system is non-degenerate because the ellipse is rank.
Importantly, to motivate step 8 of the procedure, we use \Cref{orthogw}.
\begin{lemma}\label{orthogw}
$\W = \U \cdot \M^{-1} \cdot \O$ for some orthogonal matrix $\O$.
\end{lemma}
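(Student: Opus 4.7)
The plan is to reduce the lemma to showing that the matrix $B := \M \U^\top \W \in \R^{h \times h}$ is orthogonal. Once this is established, an ``absorption'' identity $\U \U^\top \W = \W$ gives $\W = \U(\U^\top \W) = \U \M^{-1}(\M \U^\top \W) = \U \M^{-1} B$, so setting $\O := B$ finishes the proof.

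I would first justify the absorption identity. By the rank assumptions of \Cref{lem:h_recovery}, the hidden-state matrix $\H$ with columns $g_\theta(p_i)$ has rank $h$, so the column space of $\Q = \W \H$ coincides with the column space of $\W$. Since the columns of $\U$ form an orthonormal basis for $\mathrm{col}(\Q)$ (the truncated SVD in step 3), every column of $\W$ lies in $\mathrm{col}(\U)$, hence $\U \U^\top \W = \W$. As a byproduct, $T := \U^\top \W \in \R^{h \times h}$ is invertible, since $Tg = 0$ would force $\W g = \U T g = 0$ and hence $g = 0$.

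Next I would exploit the ellipsoid equation. For every queried $p$, the point $x := T g_\theta(p)$ satisfies $x^\top \A x = 1$ by step 5, and by the normalization assumption in \Cref{subapp:assumptions} the vector $g := g_\theta(p)$ lies on the unit sphere, so $g^\top g = 1$. Subtracting yields $g^\top (T^\top \A T - I) g = 0$ for every observed $g$. Under the non-degeneracy assumption, the rank-one symmetric matrices $g_\theta(p) g_\theta(p)^\top$ collected over the $\binom{h+1}{2}$ queries span the space of symmetric $h \times h$ matrices, which forces $T^\top \A T - I = 0$. Substituting the Cholesky factorization $\A = \M^\top \M$ from step 7 then gives $(\M T)^\top (\M T) = I$, so $B = \M T$ is orthogonal.

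The main obstacle I anticipate is precisely this upgrade from ``$g^\top (T^\top \A T - I) g = 0$ on the observed unit vectors'' to ``$T^\top \A T = I$ identically''; this is where the genericity/non-degeneracy assumptions of \Cref{subapp:assumptions} genuinely enter, and it is why step 1 collects at least $\binom{h+1}{2} + (h+1)$ queries. Two minor bookkeeping points for a full write-up will be (i) accounting for the origin-shift in step 2, so that the $c^\top \A c = 1$ normalization recovered in step 6 corresponds to the ellipsoid truly centered at the image of the sphere, and (ii) invoking the ``$\varepsilon \approx 0$'' assumption so that RMSNorm/LayerNorm outputs really lie on a sphere rather than merely near one. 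With these routine points handled, orthogonality of $B$ combined with the absorption identity immediately yields $\W = \U \M^{-1} \O$ as claimed.
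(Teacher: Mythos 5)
Your proposal is correct and follows essentially the same route as the paper: you identify the same matrix $\O = \M\U^\top\W$ and derive its orthogonality from the fact that it maps the sphere of hidden states onto a sphere, invoking the non-degeneracy of the queried $g_\theta(p_i)$. You are in fact somewhat more careful than the paper's own proof, which simply asserts that a map preserving norms on a non-degenerate point set is an orthogonal endomorphism, whereas you correctly reduce this to the outer products $g_\theta(p)g_\theta(p)^\top$ spanning the symmetric matrices (the condition the $\binom{h+1}{2}$-query ellipsoid fit implicitly requires anyway), and you also spell out the absorption identity $\U\U^\top\W=\W$ that the paper leaves implicit in its final step.
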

\begin{proof}
We know that $g_\theta(p_i)$ lie on a sphere. The equation ($x_i-c)^\top \A (x_i-c) = 1$ is equivalent to ($x_i-c)^\top \M^\top \M (x_i-c) = 1$, which is equivalent to $\lVert \M (x_i-c) \rVert = 1$. 
This means that $\M (x_i-c)$ lie on a sphere. 
Because $\M (x_i-c) = \M \cdot \U^\top \cdot \Et \cdot g_{\theta}(p_i)$, we have that $\M \cdot \U^\top \cdot \Et$ is a norm-preserving transformation on the points $g_{\theta}(p_i)$.
By the assumption that $g_{\theta}(p_i)$ are not in a degenerate lower-dimensional subspace, we have that $\M \cdot \U^\top \cdot \Et =: \O$ is a norm-preserving endomorphism of $\R^h$,
hence an orthogonal matrix. This directly implies $\W = \U \cdot \M^{-1} \cdot \O$ as claimed.
\end{proof}




\section{Quantization and Noise}
\label{app:quantnoise}

\subsection{Quantization}
Quantization is a popular strategy for decreasing a model's memory footprint and speeding up inference. 
In addition to these benefits, using lower-precision number representations also effectively adds noise.
As noted in Section~\ref{sec:mitigations}, adding noise to the output logits could prevent our attack. A natural question that follows is, does quantization add sufficient noise to make our attack ineffective or more difficult to carry out? 

For a simple test, we quantize Llama-7B at both 8-bits and 4-bits, and compare our baseline attack (Section~\ref{sec:attack_warmup}) to the default 16-bit implementation.  
We quantize using \nicett{bitsandbytes}~\citep{dettmers2022optimizers}, which HuggingFace supports for out-of-the-box quantization of model weights and lower-precision inference (Figure~\ref{fig:llama-7b-quantized-comparison}).  
We observe no meaningful differences at different levels of quantization; querying each model results in recovering the same same embedding matrix dimension $h$ in the same number of queries.
Given that 8-bit and 4-bit quantization are generally observed to not have a large impact on performance, this is perhaps an unsurprising result; any noise from quanitization does not seem to have a meaningful impact on the logits (in the context of our attack).

\subsection{Noise}
One natural defense to our attacks is to obfuscate the logits by adding noise. This will naturally induce a tradeoff between utility and vulnerability---more noise will result in less useful outputs, but increase extraction difficulty. We empirically measure this tradeoff in Figure~\ref{subfig:noise-mse}. We consider noise added directly to the logits, that is consistent between different queries of the same prompt. To simulate this, we directly add noise to our recovered logits, and recompute the extracted embedding matrix. For GPT-2, we measure the RMSE between the true embedding matrix and the embedding matrix extracted with a specific noise level; for \ada{} and \babbage{}, we measure the RMSE between the noisy extracted weights and the weights we extracted in the absence of noise. We normalize all embedding matrices (to have $\ell_2$ norm 1) before measuring RMSE.

\begin{figure}
  \begin{minipage}{0.3\textwidth}
    \centering
    \includegraphics[width=\linewidth]{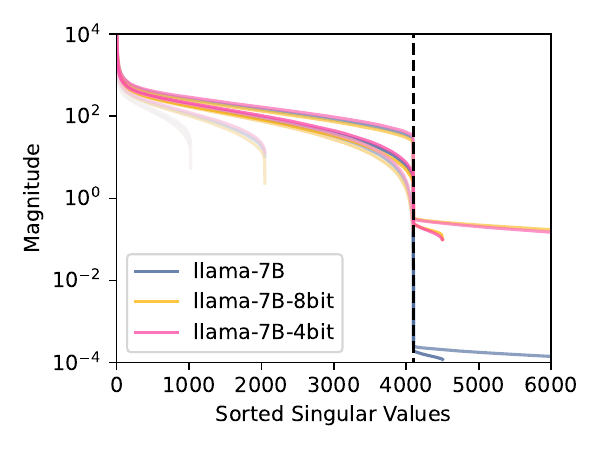}
    \captionof{subfigure}{Sorted singular values for $\{1024, 2048, 4096, 8192\}$  queries. 
    \label{subfig:llama-7b-quantized-queries}}
  \end{minipage}
  \hfill
  \begin{minipage}{0.3\textwidth}
    \centering
    \vspace{-.1cm}
    \includegraphics[width=\linewidth]{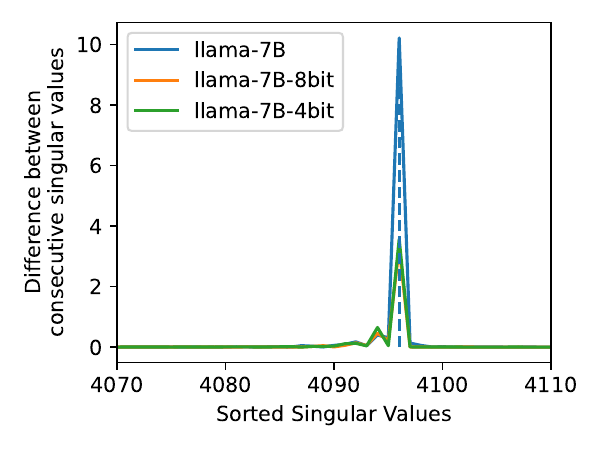}
    \captionof{subfigure}{Differences between consecutive sorted singular values.
    \label{subfig:llama-7b-quantized-s-diff}}
  \end{minipage}
  \hfill
  \begin{minipage}{0.3\textwidth}
    \centering
    \vspace{-.4cm}
    \includegraphics[width=\linewidth]{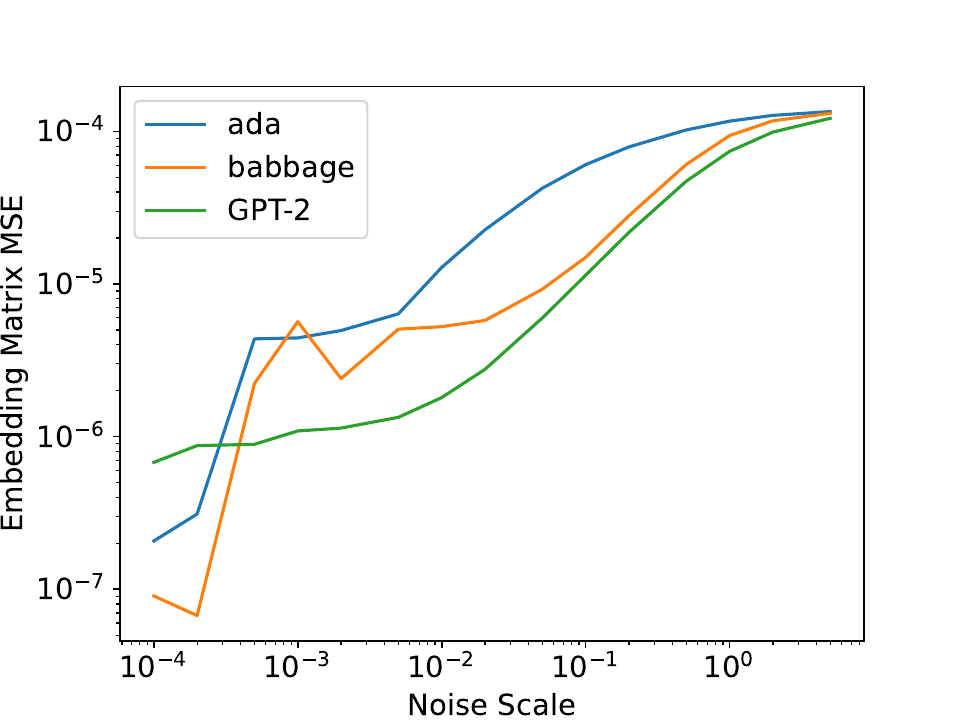}
    \captionof{subfigure}{RMSE of extracted embeddings at various noise variances.
    \label{subfig:noise-mse}}
  \end{minipage}
  \caption{In (a, b), recovering the embedding matrix dimension $h$ for Llama-7B at different levels of precision: 16-bit (default), 8-bit, and 4-bit. We observe no meaningful differences, with respect to our attack, at different levels of quantization. In (c), the RMSE between extracted embeddings as a function of the standard deviation of Gaussian noise added to the logits.}
  \label{fig:llama-7b-quantized-comparison}
\end{figure}

\begin{figure}
    \centering
    \includegraphics[width=\linewidth]{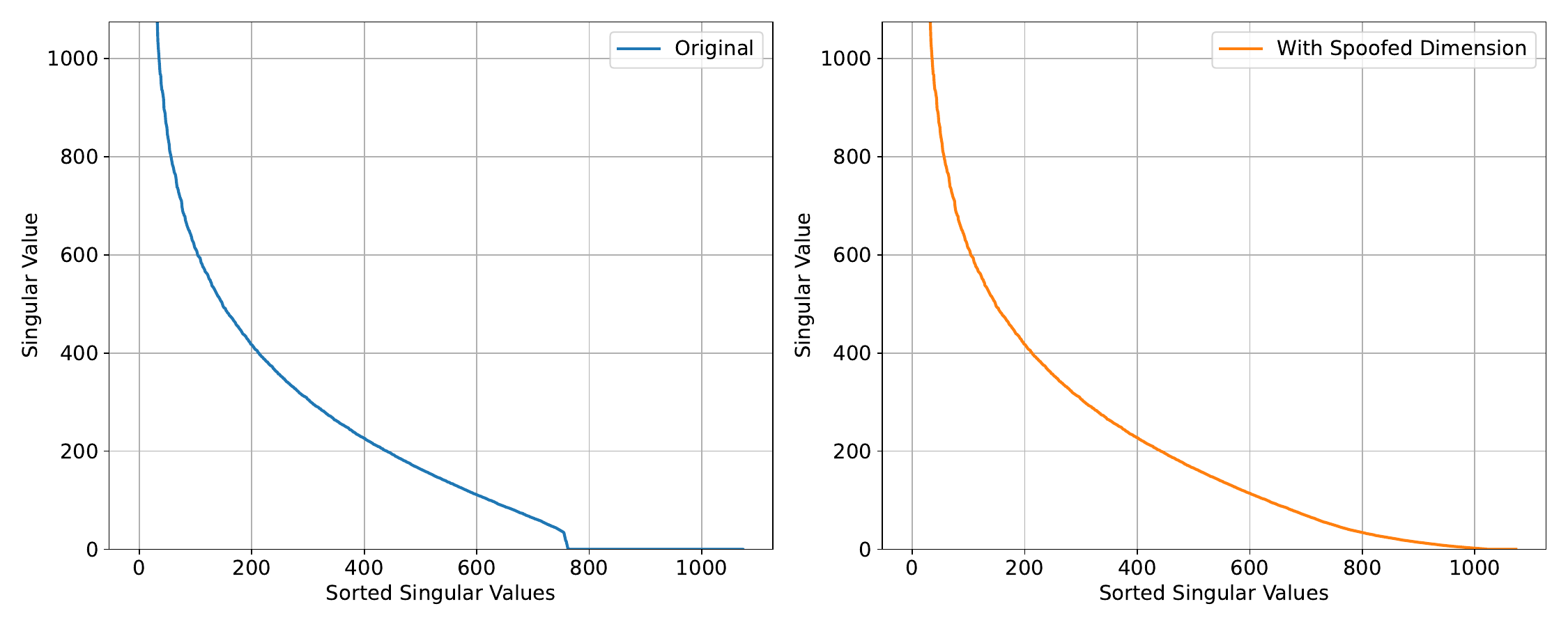}
  \vspace{-0.8cm}
  \caption{On the left, we plot the singular values that are extracted using our attack on GPT-2 small---the estimated hidden dimension is near 768. On the right, we post-hoc extend the dimensionality of the weight matrix to 1024, as described in Section~\ref{sec:defense}. This misleads the adversary into thinking the model is wider than it actually is.}
  \label{fig:gpt2_with_spoofing}
\end{figure}

\end{appendix}
\end{document}